%
\documentclass[runningheads]{llncs}
\usepackage{mdframed}
\usepackage[most]{tcolorbox}
\usepackage{todonotes}
\usepackage{mdframed}
\usepackage[most]{tcolorbox}
\usepackage{todonotes}
\usepackage[utf8]{inputenc} 
\usepackage[T1]{fontenc}    
\usepackage{hyperref}   
\usepackage{amsmath}
\usepackage{amssymb}
\usepackage{xspace}
\usepackage{mathtools}
\usepackage{graphicx,xcolor}
\usepackage{url}            
\usepackage{booktabs}       
\usepackage{amsfonts}       
\usepackage{nicefrac}       
\usepackage{microtype}      
\usepackage{lipsum}
\usepackage{fancyhdr}       
\usepackage{graphicx} 
\usepackage[ruled,linesnumbered]{algorithm2e}
\usepackage[utf8]{inputenc}
\graphicspath{{media/}}   

%

\setlength{\parskip}{1.2mm}
\setlength{\parindent}{0pt}


\newcommand{\xp}{{\sf XP}\xspace}
\newcommand{\fpt}{{\sf FPT}\xspace}

\newcommand{\np}{\textup{\textsf{NP}}\xspace}

\newcommand{\npc}{{\textsf{NP}\textrm{-complete}\xspace}}

\newcommand{\wth}{\textsf{W[2]}\textrm{-hard}\xspace}

\newcommand{\apxh}{\textrm{APX-hard}\xspace}

\newcommand{\mcs}{\textup{\textsc{MCS}}\xspace}
\newcommand{\mscs}{\textup{\textsc{MSCS}}\xspace}
\newcommand{\mcss}{\textup{\textsc{MCSS}}\xspace}
\newcommand{\mss}{\textup{\textsc{MSS}}\xspace}

\newcommand{\NN}{\mathrm{\hat{N}}\xspace}

\usepackage{multirow}

\usepackage{lineno}

\newtheorem{obs}{Observation}[theorem]
%
\usepackage{graphicx}
%
%
\begin{document}
\title{Minimum Selective Subset on Some Graph Classes}
%
%
\author{%
Bubai Manna
}%
\institute{
Indian Institute of Technology Kharagpur, India\\
\email{bubaimanna11@gmail.com}}

\authorrunning{B. Manna}
\maketitle              
\begin{abstract}
In a connected simple graph \( G = (V(G),E(G)) \), each vertex is assigned a color from the set of colors \( C=\{1, 2,\dots, c\} \). The set of vertices is partitioned as \( V(G) = \bigcup_{\ell=1}^{c} V_{\ell} \), where all vertices in \( V_{\ell} \) share the same color \( \ell \). A subset \( S \subseteq V(G) \) is called \textit{Selective Subset} if, for every vertex \( v \in V(G) \), if \( v\in V_{\ell} \), at least one of its nearest neighbors in \( S\cup (V(G)\setminus V_{\ell}) \) has the same color as \( v \). The \emph{Minimum Selective Subset} (\mss) problem seeks to find a selective subset of minimum size. The problem was first introduced by Wilfong in 1991~\cite{Wilfong} for a set of points in the Euclidean plane, where two major problems, \mcs (Minimum Consistent Subset) and \mss, were proposed.

In graph algorithms, the only known result is that the \mss problem is $\npc$, as shown in~\cite{BBC} in 2018. Beyond this, no further progress has been made to date. In contrast, the \mcs problem has been widely studied in various graph classes over the years. Therefore, in this work, we also extend the algorithmic study of \mss on various graph classes. We first present a \( \mathcal{O}(\log n) \)-approximation algorithm for general graphs with $n$ vertices and regardless of the number of colors. We also show that the problem remains $\npc$ even for planar graphs when restricted to just two colors. Finally, we provide linear-time algorithms for computing optimal solutions in trees and unit interval graphs for any number of colors\footnote{This work has been accepted in CCCG 2025}.

\keywords{Nearest-Neighbor Classification \and Minimum Consistent Subset \and Minimum Selective Subset \and Trees \and Unit Interval Graphs \and NP-complete \and Polynomial-time Algorithms}
\end{abstract}

\section{Introduction}
In many supervised learning scenarios, we are given a set of labeled training data \( T \) in a metric space \( (X, d) \), where each point \( t \in T \) is assigned a \emph{color} (or class label) from a set \( C = \{1, 2, \dots, c\} \). A common goal is to reduce the dataset to a smaller representative subset without losing essential classification information. One such problem is the \textit{Minimum Consistent Subset} (\mcs), where the task is to find the smallest subset \( S \subseteq T \) such that every point in \( T \) is either in \( S \) or has a nearest neighbor in \( S \) with the same color. This problem was first introduced by Hart~\cite{Hart}, whose foundational work has been cited more than 2,800 times, reflecting its broad impact.

Unfortunately, the \mcs problem is computationally challenging. It has been shown to be \(\npc\) when the number of colors is three or more~\cite{Wilfong}, and it remains \(\npc\) even for just two colors in the plane~\cite{Khodamoradi}. The problem is also \(\text{W}[1]\)-hard when parameterized by the size of the output subset~\cite{Chitnis22}. Despite this, several algorithmic approaches have been developed to tackle \mcs in \(\mathbb{R}^2\)~\cite{BBC}\cite{ahmad}\cite{Chitnis22}\cite{Wilfong}, emphasizing its relevance in practical machine learning and geometric settings.

A closely related variant is the \textit{Minimum Selective Subset} (\mss) problem, also studied by Wilfong~\cite{Wilfong}. This problem aims to identify the smallest subset of training data that retains the necessary structure for classification. It is particularly useful in real-world applications like fingerprint recognition, handwritten character recognition, and other pattern recognition tasks, where eliminating redundancy can lead to more efficient and accurate models.

Wilfong~\cite{Wilfong} proved that \mss, like \mcs, is \(\npc\) even when the input is limited to two colors in \(\mathbb{R}^2\). More recently, Bhore et al.~\cite{BBC} presented a Polynomial-Time Approximation Scheme (PTAS) for the problem in the multi-color setting. They also showed that \mss is \(\text{W}[1]\)-hard when parameterized by the size of the solution and lies in the class \(\text{W}[\text{2}]\) for the special case of two colors in the plane. These results help position \mss within the broader landscape of parameterized complexity and approximation.

\subsection{Notations and Definitions}
For any graph \( G = (V(G),E(G)) \), we denote the set of vertices by \( V(G) \) and the set of edges by \( E(G) \). Without loss of generality, we use \( [n] \) to denote the set of integers \( \{1,\ldots, n\} \). We use an arbitrary vertex color function \( C: V(G)\rightarrow [c] \), such that each vertex is assigned exactly one color from the set \( [c] \). For a subset of vertices \( U \subseteq V(G)\), let \( C(U) \) represent the set of colors of the vertices in \( U \), formally defined as \( C(U) = \{C(u) \mid u \in U\} \). 

For any two vertices \( u,v \in V(G) \), the shortest path distance between \( u \) and \( v \) in \( G \) is denoted by \( d (u,v) \). \( d (u,v) \) is called \emph{hop-distance} between $u$ and $v$. For a vertex \( v \in V(G) \), the distance between \( v \) and the set \( U \subseteq V(G)\) in \( G \) is given by $d(v,U) = \min_{u\in U} d (v,u)$.

The nearest neighbors of \( v \) in the set \( U \) is denoted as \( \NN(v,U) \), formally defined as  
\[
\NN(v,U) = \{u \in U \mid d (v,u) = d (v,U) \}.
\]  
Therefore, if $v\in U$, then $\NN (v, U)=\{v\}$. 

\( G[U] \) denotes the subgraph of \( G \) induced by \( U \subseteq V(G)\), and \( \lvert U \rvert \) is the cardinality of $U$. We use standard graph-theoretic notation and symbols as presented in~\cite{diestel2012graph}.

Suppose \( G = (V(G), E(G)) \) is a given connected and undirected graph, where the vertices are partitioned into \( c \) color classes, namely \( V_1, V_2, \dots, V_c \). This means that each vertex in \( V(G) \) has a color from the set \( [c] \), and each vertex in \( V_{\ell} \) has color \( \ell \). 
A \emph{Minimum Consistent Subset} (\mcs) is a subset \( S \subseteq V(G) \) of minimum cardinality such that for every vertex \( v \in V(G) \), if \( v \in V_{\ell} \), then  
\[
\NN(v,S) \cap V_{\ell} \neq \emptyset.
\]  
The definition of a selective subset is as follows:
 \begin{definition}
A subset \( S \subseteq V(G) \) is called a \emph{Minimum Selective Subset (\mss)} if, for each vertex \( v \in V(G) \), if \( v \in V_{\ell} \), the set of nearest neighbors of \( v \) in \( S \cup (V(G) \setminus V_{\ell}) \), 
contains at least one vertex \( u \) such that \( C(v) = C(u) \), and \(\lvert S\rvert \) is minimum.

In other words, we are looking for a vertex set \( S \subseteq V(G) \) of minimum cardinality such that every vertex \( v \) has at least one nearest neighbor of the same color in the graph, excluding those vertices of the same color as \( v \) that are not in \( S \). 
\end{definition}

\begin{figure}[t]
\includegraphics[width=8cm]{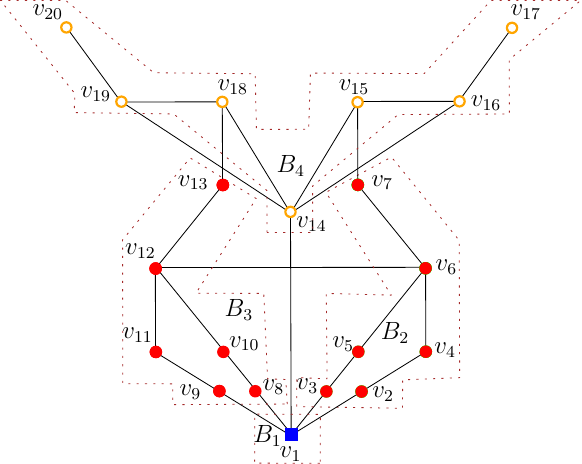}
\centering
\caption{ \textbf{Colors:} \emph{blue = square}, \emph{red = disk}, \emph{orange = fdisk}. The vertex set of the graph is given by \( V(G) = V_{\text{blue}} \cup V_{\text{red}} \cup V_{\text{orange}} \), where \( V_{\text{blue}} = \{v_1\} \), \( V_{\text{red}} = \{v_2, \dots, v_{13}\} \), and \( V_{\text{orange}} = \{v_{14}, \dots, v_{20}\} \). The sets \( \{v_1, v_4, v_5, v_{10}, v_{11}, v_{17}\} \) and \( \{v_1, v_3, v_4, v_8, v_{11}, v_{17}\} \) are both \mcs{} solutions. The sets \( M = \{v_1, v_2, v_3, v_7, v_8, v_9, v_{13}, v_{14}\} \) and \( M = \{v_1, v_4, v_5, v_7, v_{10}, v_{11}, v_{13}, v_{14}\} \) are both valid \mss{} solutions. Brown-dotted regions indicate the blocks. The complete list of blocks is: \( B_1 = \{v_1\} \), \( B_2 = \{v_2, \dots, v_7\} \), \( B_3 = \{v_8, \dots, v_{13}\} \), and \( B_4 = \{v_{14}, \dots, v_{20}\} \).}\label{bubai1}
\end{figure}
Figure~\ref{bubai1} illustrates that \mcs and \mss are distinct and may not be unique for a given graph. The selective subset problem on graphs is defined as follows:

\begin{tcolorbox}[
    enhanced,
    title={\color{black} \sc{Selective Subset Problem on Graphs}},
    colback=white,
    boxrule=0.4pt,
    attach boxed title to top center={xshift=0cm, yshift*=-2mm},
    boxed title style={size=small, frame hidden, colback=white}
]
    \textbf{Input:} A graph \( G=(V(G),E(G)) \), a color function \( C:V(G)\rightarrow [c] \), and an integer \(s \).\\
    \textbf{Question:} Does there exist a selective subset of size \( \leq s \) for \( (G,C) \)?
\end{tcolorbox}
Banerjee et al.~\cite{BBC} proved that the \mcs{} problem is \wth~\cite{book} when parameterized by the size of the minimum consistent subset, even for graphs with just two colors. This highlights the intrinsic difficulty of the problem, even in relatively simple settings. Building on this, Dey et al.~\cite{DeyMN21}\cite{10.1007/978-3-030-67899-9_37} \cite{DeyMN23} developed polynomial-time algorithms for \mcs{} on several restricted graph classes, including paths, spiders, caterpillars, combs, and trees (for trees, when the number of colors \( c = 2 \)). Additional algorithmic results---including \xp and \fpt algorithms (when \( c \) is treated as a parameter), as well as NP-completeness---have been obtained for trees~\cite{Arimura23} \cite{aritra}. Furthermore, \mcs is also $\npc$ in interval graphs \cite{aritra} and $\apxh$ in circle graphs \cite{manna2024minimum}.

Variants of the problem, such as the \textit{Minimum Consistent Spanning Subset} (\mcss) and the \textit{Minimum Strict Consistent Subset} (\mscs), have also been investigated on trees~\cite{biniaz2024minimum} \cite{manna2024minimumx}, enriching the broader study of consistency-based data selection problems.

In contrast, the algorithmic landscape of the \textit{Minimum Selective Subset} (\mss) problem remains largely unexplored. To date, the only known result is a general NP-completeness proof by Banerjee et al.~\cite{BBC}, which establishes that \mss{} is hard on general graphs. This reveals a significant gap in the literature and motivates a more detailed investigation into both the computational complexity and algorithmic tractability of the problem.

\subsection{Our Results}

While previous work has primarily focused on establishing the hardness of \mss, our study aims to bridge this gap by providing new insights into both its complexity and approximability across various graph classes. We identify several cases where efficient algorithms or approximations are achievable and others where hardness remains.

In Section~\ref{section3}, we present the first non-trivial approximation result for \mss{} in general graphs: an \( \mathcal{O}(\log n) \)-approximation algorithm, where \( n \) is the number of vertices. This result holds for any number of colors and offers a useful baseline for practical applications.

Planar graphs are among the most studied graph classes in algorithmic graph theory due to their rich structure and frequent appearance in real-world applications such as geographic information systems and circuit design. While many hard problems become easier on planar graphs, we show in Section~\ref{section4} that \mss{} remains \(\npc\) even in this restricted setting with only two colors. This illustrates the fundamental difficulty of the problem, even in well-behaved graph classes.

We then turn our focus to two more restricted and well-understood graph classes: trees and unit interval graphs. Trees are among the simplest graph structures and often admit elegant and efficient algorithms. Unit interval graphs, which model equal-length intervals on the real line, are widely used in areas such as scheduling and bioinformatics due to their tractable structure. These graph classes have already been studied in depth in the context of \mcs~\cite{aritra}, so it is natural to explore \mss{} in the same setting.

In Sections~\ref{section5} and ~\ref{section6}, we show that \mss{} can indeed be solved efficiently in both cases. Specifically, we provide \textit{linear-time algorithms} for computing optimal \mss{} solutions on trees and unit interval graphs for any number of colors. These results significantly extend the frontier of tractable cases for the \mss{} problem and open new directions for applications in structured domains.

\section{Preliminaries}\label{section2}
If all the vertices of a graph \( G \) are of the same color (i.e., \( G \) is monochromatic), then any single vertex forms a valid \mss{}, since every other vertex can select it as a nearest neighbor with the same color. 

More generally, in any graph \( G \), every selective subset must include at least one vertex from each color class. To see this, suppose for contradiction that there exists a color class \( V_\ell \subseteq V(G) \) such that \( V_\ell \cap S = \emptyset \), where \( S \) is a selective subset. Then, for any vertex \( v \in V_\ell \), all vertices in its nearest neighbor set \( \NN(v, S \cup (V(G) \setminus V_\ell)) \) must belong to a different color class (since \( S \cap V_\ell = \emptyset \)), contradicting the definition of a selective subset which requires each vertex to have a same-colored nearest neighbor in \( S \). 

\begin{definition}
A block is defined as a maximal connected subgraph whose vertices share the same color.
\end{definition}
An example illustrating the block decomposition is shown in Figure~\ref{bubai1}.
\begin{lemma}\label{bubailemma101}
Any selective subset must contain at least one vertex from each block. 
\end{lemma}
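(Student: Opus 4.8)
The plan is to argue by contradiction. Suppose some block $B$, all of whose vertices carry color $\ell$, is missed entirely by a selective subset $S$, i.e.\ $S \cap B = \emptyset$; I will exhibit a vertex of $B$ that violates the defining condition of a selective subset.

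First I would record two consequences of $B$ being a \emph{maximal} monochromatic connected subgraph. (i) Any vertex lying outside $B$ but adjacent to a vertex of $B$ must have a color different from $\ell$ — otherwise it would extend the monochromatic connected set, contradicting maximality. (ii) Conversely, every color-$\ell$ neighbor of a vertex $v \in B$ already belongs to $B$. Combined with the assumption $S \cap B = \emptyset$, this means no color-$\ell$ neighbor of any $v \in B$ lies in $S$, and of course no color-$\ell$ vertex lies in $V(G) \setminus V_\ell$ either.

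The degenerate case $B = V(G)$ (a monochromatic graph) is dealt with separately: there $S \cap B = \emptyset$ forces $S = \emptyset$, which already contradicts the fact established just above that every selective subset meets each color class. So I may assume $B \neq V(G)$. Since $G$ is connected, there is an edge joining $B$ to its complement; pick its endpoint $v \in B$, with a neighbor $w \notin B$. By (i), $C(w) \neq \ell$, so $w \in V(G) \setminus V_\ell \subseteq S \cup (V(G)\setminus V_\ell)$ and $d(v,w) = 1$. As $v$ itself is neither in $S$ nor in $V(G)\setminus V_\ell$, it lies outside the set $S \cup (V(G)\setminus V_\ell)$, so the nearest-neighbor distance from $v$ into this set equals $1$, realized by $w$.

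It then remains to check that \emph{every} vertex in $\NN(v, S \cup (V(G)\setminus V_\ell))$ has color different from $\ell$: these are precisely the neighbors of $v$ that belong to $S \cup (V(G)\setminus V_\ell)$, and by (ii) a color-$\ell$ neighbor of $v$ lies in $B$, hence is excluded from $S$ and from $V(G)\setminus V_\ell$ alike. Thus $v$ has no same-colored nearest neighbor in $S \cup (V(G)\setminus V_\ell)$, contradicting the definition of a selective subset. I expect the only real subtlety to be this last step — arguing that maximality of the block forces every distance-one same-colored candidate to sit inside $B$ and therefore outside $S$ — together with isolating the degenerate monochromatic case, which cannot be handled by the boundary argument and instead reduces to the earlier color-class observation.
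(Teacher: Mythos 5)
Your proof is correct, but it takes a noticeably different route from the paper's. The paper fixes an \emph{arbitrary} vertex $v$ in the missed block, lets $u$ be its assumed same-colored nearest neighbor in $M \cup (V(G)\setminus V_\ell)$ (necessarily in a different block), and then observes that the shortest $v$--$u$ path must contain a differently-colored vertex $w$ with $d(v,w) < d(v,u)$, contradicting the nearestness of $u$. You instead localize the contradiction at a \emph{boundary} vertex: picking $v \in B$ with a neighbor outside $B$, you pin the nearest-neighbor distance to exactly $1$ and use maximality to show every distance-$1$ candidate in $S \cup (V(G)\setminus V_\ell)$ has a different color. Your version is more elementary (pure adjacency, no path-interception argument) and explicitly isolates the monochromatic case $B = V(G)$, which the paper handles only implicitly (there the candidate set is empty, so the required $u$ fails to exist at the first step). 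What the paper's formulation buys in exchange is a stronger by-product: because its argument applies to \emph{every} vertex of the block, it shows that each vertex's same-colored nearest neighbor must come from within its own block, and the paper immediately uses exactly this to justify Observation~\ref{obs3} (that blocks can be solved independently). Your boundary-vertex argument proves the lemma as stated but would need to be rerun for arbitrary $v$ (essentially reverting to the paper's path argument) to support that observation.
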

\begin{proof}
Suppose, for contradiction, that \( M \) is a selective subset of a graph \( G \), but there exists a block \( B_i \) such that \( M \cap B_i = \emptyset \). Let the color assigned to the vertices of \( B_i \) be \( \ell \), and let \( v \in B_i \) be any vertex.

Therefore, there must exist a nearest neighbor (say \( u \)) of \( v \) in \( M \cup (V(G) \setminus V_{\ell}) \) such that \( C(u) = C(v) \); otherwise, $M$ would not be a selective subset. Since \( M \cap B_i = \emptyset \), we have \( u \notin B_i \). Thus, \( u \in B_j \) for some \( j \neq i \).

Let \( P \) denote the shortest path between \( v \) and \( u \). As \( B_i \) and \( B_j \) are distinct blocks and \( C(v) = C(u) \), there must exist at least one vertex \( w \in P \) such that \( C(w) \neq C(u) \). This implies that \( w \) lies closer to \( v \) than \( u \) does, i.e., \( d(v,w) < d(v,u) \), and \( w \in M \cup (V(G) \setminus V_{\ell}) \) with \( C(w) \neq C(u) \).

Hence, instead of \( u \), $w$ is the nearest neighbor of \( v \) in \( M \cup (V(G) \setminus V_{\ell}) \) with $C(w)\neq C(v)$, violating the definition of a selective subset. This contradiction completes the proof.
\end{proof}
The above proof indicates that, for any vertex \( v \in B_i \), either \( v \) itself must belong to \( M \), or there must exist a vertex \( u \in B_i \) such that \( u \in M \) and \( u \) is the nearest neighbor of \( v \) in \( M \cup (V(G) \setminus V_{\ell}) \). In other words, each vertex \(v\in B_i \) must have its nearest neighbor in \( M \cup (V(G) \setminus V_{\ell}) \) that also lies within \( B_i \). Therefore, we have the following observation:
\begin{obs}\label{obs3}
The blocks are independent of each other in the solution of a selective subset. 
\end{obs}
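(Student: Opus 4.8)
The plan is to make precise what ``independent'' means here and then verify it: I will show that, for each block $B_i$ carrying color $\ell$, the selective constraint on every vertex $v \in B_i$ can be decided by inspecting only $M \cap B_i$, so that the choices made inside distinct blocks never interact. Fix such a $v$ and set $r = d(v, V(G) \setminus V_\ell)$, the hop-distance from $v$ to the nearest vertex whose color differs from $\ell$. The first thing to note is that $r$ is determined entirely by $G$ and the coloring; since $V(G) \setminus V_\ell$ is a fixed set, $r$ is independent of $M$.

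Next I would record the consequence of Lemma~\ref{bubailemma101} in a quantitative form: any same-colored nearest neighbor of $v$ in $M \cup (V(G) \setminus V_\ell)$ must itself lie in $B_i$, and in fact any color-$\ell$ vertex outside $B_i$ is \emph{strictly} farther than $r$. Indeed, let $u'$ be a color-$\ell$ vertex in a different block. Following a shortest $v$--$u'$ path, let $b$ be the first vertex that leaves $B_i$; by maximality of blocks $b$ is adjacent to $B_i$ but not in it, so $C(b) \neq \ell$ and hence $b \in V(G) \setminus V_\ell$, giving $d(v,b) \geq r$. Since $b$ lies on a shortest $v$--$u'$ path and $b \neq u'$, we get $d(v,u') = d(v,b) + d(b,u') \geq r + 1 > r$.

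Combining these observations yields the decoupling. The nearest-neighbor distance is $R = \min\{ d(v, M \cap V_\ell),\, r \}$, and the only color-$\ell$ vertices that can meet the threshold $r$ are those in $M \cap B_i$, because all other color-$\ell$ vertices sit at distance at least $r+1$. Hence $R$ is attained by a vertex of color $\ell$ if and only if $d(v, M \cap B_i) \leq r$, so $v$'s selective constraint is equivalent to this single inequality, which involves only the restriction of $M$ to $B_i$ together with the fixed quantity $r$. As this equivalence holds for every vertex of every block, the entire system of selective constraints splits into independent per-block subsystems; an optimal selective subset is therefore obtained by solving each block separately and taking the union, which is precisely the claimed independence.

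I expect the only delicate point to be the strict inequality $d(v,u') > r$ for color-$\ell$ vertices $u'$ in other blocks, i.e.\ confirming that any path reaching such a $u'$ is forced to cross a differently-colored boundary vertex already at distance at least $r$. This is where the maximality of blocks is essential, since it guarantees that no two blocks of the same color are adjacent; everything else is a routine unwinding of the definition of $R$ and the already-established fact that a same-colored nearest neighbor must stay within $B_i$.
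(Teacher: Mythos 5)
Your proof is correct and takes essentially the same route as the paper: the decisive step---that a shortest path from $v \in B_i$ to a same-colored vertex in another block must first cross a differently-colored boundary vertex, which is strictly closer---is exactly the argument the paper extracts from the proof of Lemma~\ref{bubailemma101} to justify this observation. Your quantitative packaging via the fixed threshold $r = d(v, V(G)\setminus V_\ell)$ and the equivalence of the selective condition with $d(v, M \cap B_i) \leq r$ is a cleaner formalization of the same idea rather than a different approach.
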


\section{$\mathcal{O}(\log n)$-Approximation Algorithm of \mss in General Graphs}\label{section3}
\begin{figure}[t]
\includegraphics[width=10cm]{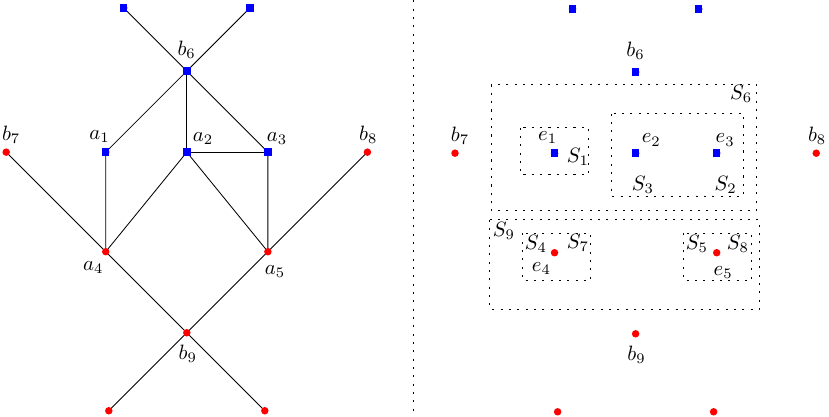}
\centering
\caption{Reduction to set cover. Colors: \emph{blue= square}, \emph{red= disk}. Sets are $B_{1,1}=\{a_1, a_2, a_3\}$, $B_{2,1}=\{a_4,a_5\}$, $B_{1,2}=\{b_6\}$, $B_{2,2}=\{b_7,$ $ b_8, b_9\}$. Sets are $B_{i,1}^{all}=\{a_1,$ $\dots,a_5\}$, $B^{all}=\{b_1,$ $\dots, b_9\}$, where $b_i=a_i$ for $i=1,\dots, 5$. The set elements for the set cover problem is $I=\{e_1,\dots, e_5\}$ where $e_i=b_i=a_i$ for $i=1,\dots,5$. The subsets are given $S_1=\{e_1\}$, $S_2=S_3=\{e_2,e_3\}$, $S_4=S_7=\{e_4\}$, $S_5=S_8=\{e_5\}$, $S_6=\{e_1, e_2, e_3\}$, $S_9=\{e_4, e_5\}$. The elements inside the dotted rectangles represent the sets. The selective subset is $M =\{b_6, b_9\}$, and the corresponding set cover is $\{S_6, S_9\}$.}\label{fig:1E} 
\end{figure}
Let \( G = (V(G), E(G)) \) be a graph with \( c \) colors, and let \( B_1, B_2, \dots, B_k \) denote the blocks in \( G \). The classical \emph{Set Cover} problem is defined as follows: Given a universe \( I \) and a collection \( \mathcal{S} \) of \( m \) subsets of \( I \), the objective is to select the smallest number of subsets from \( \mathcal{S} \) such that their union covers all elements of \( I \). We now describe a reduction from the \mss{} problem in \( G \) to an instance of the \textit{Set Cover} problem (see Figure~\ref{fig:1E}).

\textbf{Reduction.} Let \( B_{i,1} \subseteq B_i \) be the set of vertices adjacent to at least one vertex of a different color, for each \( i = 1, \dots, k \). For each vertex \( v \in B_i \setminus B_{i,1} \), if $v$ in adjacent to at least one vertex of $B_{i,1}$, include $v$ in $B_{i,2}$. Let \( B_{i,1}^{all} = \bigcup_{i=1}^{k} B_{i,1} = \{a_1, \dots, a_{n_1}\} \) and \( B^{all} = \bigcup_{i=1}^{k} (B_{i,1} \cup B_{i,2}) = \{b_1, \dots, b_{n_2}\} \) where $n_1\leq n$ and $n_2\leq n$. Therefore, each $b_j$ is a vertex of either $B_{i,1}$ or $B_{i,2}$ $\mbox{ for } j=1,\dots,n_2$ and $\mbox{for some } i\in \{1,\dots,k\}$. 

Construct a universe set \( I = \{e_1, \dots, e_{n_1}\} \), where each \( e_i \) corresponds to vertex \( a_i \in B_{i,1}^{all} \). For each \( b_i \in B^{all} \), define a set \( S_i \subseteq I \) such that \( e_j \in S_i \) if the corresponding vertex \( a_j \) and the vertex \( b_i \) are either the same vertex or adjacent and in the same block. Let \( S = \{S_1, \dots, S_{n_2}\} \).
\begin{lemma}{\label{bubailemma4}}
If a set cover of $I$ uses some sets of $S$, then the vertices represented by those sets form a selective subset, and vice versa.
\end{lemma}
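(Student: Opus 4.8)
The plan is to establish the two implications separately, leaning on the block-independence from Lemma~\ref{bubailemma101} and Observation~\ref{obs3}, together with a single distance fact. Fix a block $B_i$ of colour $\ell$ and a vertex $v\in B_i$, and write $r=d(v,V(G)\setminus V_{\ell})$ for the distance from $v$ to the nearest differently coloured vertex. I claim there is a boundary vertex $a\in B_{i,1}$ with $d(v,a)=r-1$: a shortest $v$-to-differently-coloured path must leave $B_i$ through a vertex of $B_{i,1}$, and every vertex strictly before that exit has colour $\ell$ (otherwise a closer differently coloured vertex would exist), so that vertex $a$ is the one preceding the exit. In particular a boundary vertex $a_j\in B_{i,1}$ has $r=1$. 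Throughout I will use the reformulation, justified by Lemma~\ref{bubailemma101}, that $v$ is satisfied precisely when $M$ contains a same-coloured vertex $u\in B_i$ with $d(v,u)\le r$, since any same-coloured nearest neighbour of $v$ in $M\cup(V(G)\setminus V_{\ell})$ must stay inside $B_i$.

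For the direction ``set cover $\Rightarrow$ selective subset,'' let $\mathcal{C}\subseteq\{S_1,\dots,S_{n_2}\}$ cover $I$ and let $M=\{b_i : S_i\in\mathcal{C}\}$. First I would show every boundary vertex is satisfied: for $a_j\in B_{i,1}$ the element $e_j$ is covered by some $S_i\in\mathcal{C}$, which by the construction of $S_i$ means the chosen vertex $b_i$ either equals $a_j$ or is adjacent to $a_j$ inside $B_i$; hence $d(a_j,b_i)\le 1=r$, so $a_j$ is satisfied. I would then propagate satisfaction to an arbitrary interior vertex $v\in B_i$ using the distance fact: take the boundary vertex $a\in B_{i,1}$ with $d(v,a)=r-1$, and let $u\in M$ witness the satisfaction of $a$, so $d(a,u)\le 1$. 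By the triangle inequality $d(v,u)\le(r-1)+1=r$, and $u$ is a same-coloured vertex of $B_i$, so $v$ is satisfied as well. Because $G$ is not monochromatic (that case is trivial, handled in Section~\ref{section2}), every block has $B_{i,1}\neq\emptyset$, and covering those elements forces $M$ to meet every block; hence $M$ is a selective subset.

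For the converse, let $M$ be a selective subset. Each boundary vertex $a_j\in B_{i,1}$ has $r=1$, so its satisfaction yields a same-coloured $u\in M\cap B_i$ with $d(a_j,u)\le 1$, i.e.\ $u=a_j$ or $u$ is adjacent to $a_j$ within $B_i$. In either case $u\in B^{all}$, and by the definition of $S_u$ we get $e_j\in S_u$. Thus $\{S_u : u\in M\cap B^{all}\}$ covers $I$. Since the index $i$ simultaneously names the vertex $b_i$ and the set $S_i$, this correspondence is one-to-one in both directions, so $|\mathcal{C}|=|M|$ and the two optima coincide, which is what the $\order{\log n}$-approximation will ultimately need.

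The step I expect to require the most care is the interior propagation in the forward direction: I must check not merely that $d(v,u)\le r$, but that this actually forces a same-coloured vertex into $\NN(v,M\cup(V(G)\setminus V_{\ell}))$. The decisive observation is that every differently coloured vertex is at distance at least $r$ from $v$, so if the realized minimum distance is strictly below $r$ it is attained only by same-coloured vertices of $M$, while if it equals $r$ the witness $u$ itself is a nearest neighbour; either way a same-coloured nearest neighbour exists. Verifying the underlying distance identity $d(v,a)=r-1$, and in particular that the relevant shortest paths remain inside the monochromatic block $B_i$, is the other point I would treat carefully rather than assume.
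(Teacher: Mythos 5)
Your proof is correct and follows essentially the same route as the paper's: boundary vertices of $B_{i,1}$ correspond to universe elements, covering an element yields a same-coloured witness at distance at most $1$ from that boundary vertex, and arbitrary vertices are handled through their nearest boundary vertex, with the converse obtained by reading the satisfaction witness of each boundary vertex as a covering set. In fact your write-up is tighter than the paper's in two spots it glosses over --- the distance identity $d(v,B_{i,1}) = d(v, V(G)\setminus V_\ell) - 1$ and the verification that a witness at distance at most $r$ really places a same-coloured vertex in $\NN(v, M\cup(V(G)\setminus V_\ell))$ --- so no changes are needed.
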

\begin{proof}
Suppose $t$ sets in $S$ cover $I$, and let the corresponding vertices of these $t$ sets be $b_1$, $b_2$, \dots, $b_{t}$. We need to show that $M=\{b_1, \dots,b_{t}\}$ forms a selective subset of $G$.

Consider any vertex $v$ in the block $B_i$. Let $v\in V_{\ell}$ and $a_i\in B_{i,1}$ be the nearest vertex of $v$ in $B_{i,1}$. We have two cases as follows:
\begin{itemize}
    \item If $a_i\in M$, then $v$ has a nearest vertex in $M\cup (V(G)\backslash V_{\ell})$, which is $a_i$ and $C(a_i)=C(v)$.
    \item If $a_i \notin M$, then there must exist a vertex $b_i\in B_{i,2}$ adjacent to $a_i$. Since $\{S_1, \dots,S_{k}\}$ is a set cover of $I$, the corresponding element $e_i$ of the vertex $a_i$ must belong to at least one of $S_1, \dots,S_{k}$. This implies that $b_i$ must be in $M$ (as $a_i\notin M$); otherwise, $e_i$ would not be covered by any $S_x$, $1\leq x\leq t$, contradicting the assumption that $\{S_1, \dots,S_{t}\}$ is a set cover. Therefore, $v$ has a nearest neighbor in $M\cup (V(G)\backslash V_{\ell})$ which is $b_i$ and $C(v)=C(b_j)$, because $d (v,u) \geq d(v,b_i)$ for all $u\notin V_{\ell}$. 
\end{itemize}
This confirms that $M$ is indeed a selective subset.

Conversely, suppose the vertices corresponding to \( t \) sets in \( S \) form a selective subset of \( G \), and let those sets be \( S_1, \dots, S_t \). We need to show that \( S' = \{S_1, \dots, S_t\} \) is a set cover of \( I \).

Suppose, for contradiction, that $S'$ does not form a set cover of $I$. Then, there exists an element $e_j\in I$ such that $e_j\notin S_i$ for all $i=1,\dots,t$. Let the corresponding vertex of $e_j$ be $a_j\in B_{r,1}\subseteq B_r$. Also, let $D'$ denote the set of adjacent vertices of $a_j$ in $B_r$. Therefore, $D'\subseteq B^{all}$. Since $a_j \in B_{r,1}$, $a_j$ must have at least one adjacent vertex $a_x$ such that $C(a_j)\neq C(a_x)$.

Moreover, since $e_j\notin S_i$ $\mbox{ for } i=1,\dots,t$, no set corresponding to the vertices in $\{a_j\}\cup D'$ belongs to any of $S_i, \dots, S_t$; otherwise, $e_j$ would have appeared in at least one of these sets, contradicting our assumption that $S'$ is not a set cover of $I$. This implies that no vertex in $\{a_j\} \cup D'$ is contained in $M$. Suppose that $a_j\in V_{\ell}$. Consequently, $a_j$ does not have any nearest neighbor in $M\cup (V(G)\backslash V_{\ell})$, since $d(a_j, a_x)=1$ and $C(a_j)\neq C(a_x)$, leading to a contradiction. Thus, $S'$ must form a valid set cover of $I$.
\end{proof}

\begin{theorem}
The Minimum Selective Subset problem admits an \( \mathcal{O}(\log n) \)-approximation in general graphs.
\end{theorem}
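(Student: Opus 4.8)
The plan is to leverage the size-preserving reduction to Set Cover together with Lemma~\ref{bubailemma4}, and then simply invoke the classical greedy algorithm for Set Cover. First I would argue that the two optimization problems have \emph{equal} optima. The forward direction of Lemma~\ref{bubailemma4} already shows that any set cover of \( I \) built from \( t \) sets yields a selective subset of size exactly \( t \), so \( \mathrm{OPT}_{\mathsf{MSS}} \le \mathrm{OPT}_{\mathsf{SC}} \). For the reverse inequality I would show that there is always an optimal selective subset contained in \( B^{all} = \bigcup_i (B_{i,1} \cup B_{i,2}) \): every boundary vertex \( a_j \in B_{i,1} \) has a differently colored neighbour, so \( d(a_j, V(G)\setminus V_\ell)=1 \), and the definition of a selective subset forces some same-colored vertex at hop-distance at most \( 1 \) from \( a_j \) to lie in \( M \); such a vertex necessarily lies in \( B_{i,1}\cup B_{i,2} \). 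Projecting an arbitrary selective subset \( M \) onto one such protecting vertex per boundary vertex therefore produces a selective subset \( M' \subseteq B^{all} \) with \( |M'|\le|M| \); the interior vertices remain satisfied because a protected nearest boundary vertex supplies a same-colored \( M' \)-vertex at least as close as any differently colored vertex. The converse direction of Lemma~\ref{bubailemma4} then turns \( M' \) into a set cover of the same size, giving \( \mathrm{OPT}_{\mathsf{SC}} \le \mathrm{OPT}_{\mathsf{MSS}} \), and hence equality.

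Next I would run the greedy Set Cover algorithm on the instance \( (I,\mathcal{S}) \), which is well known to return a cover of size at most \( H_{|I|}\cdot \mathrm{OPT}_{\mathsf{SC}} = (1+\ln|I|)\,\mathrm{OPT}_{\mathsf{SC}} \). Translating this cover back through Lemma~\ref{bubailemma4} yields a selective subset of the same size, hence of size at most \( (1+\ln|I|)\,\mathrm{OPT}_{\mathsf{MSS}} \). Since the universe is \( I=\{e_1,\dots,e_{n_1}\} \) with \( n_1 \le n \), we have \( H_{|I|} = \mathcal{O}(\log n) \), giving the claimed \( \mathcal{O}(\log n) \)-approximation. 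Finally I would check that the whole pipeline runs in polynomial time: computing the blocks, the boundary sets \( B_{i,1} \) and \( B_{i,2} \), the universe \( I \), and the family \( \mathcal{S} \) is clearly polynomial, and greedy Set Cover is polynomial in \( |I|+|\mathcal{S}| \le 2n \).

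The main obstacle I anticipate is the equality of optima rather than the approximation step itself, which is off-the-shelf. Specifically, the delicate point is the reverse inequality: one must ensure that no optimal selective subset can do strictly better by placing vertices outside \( B^{all} \) (for instance deep inside a block), and that the projection argument genuinely keeps every interior vertex satisfied while not increasing the size. Making the ``a protected boundary vertex supplies a close enough same-colored vertex'' argument precise---tracking that \( d(v,m)\le d(v,a^*)+1 = d(v, V(G)\setminus V_\ell) \), so that \( m \) is among the nearest neighbours of \( v \) in \( M'\cup(V(G)\setminus V_\ell) \)---is where the real care is needed. Everything after the optima are shown to coincide is a direct appeal to the greedy guarantee and the bound \( n_1\le n \).
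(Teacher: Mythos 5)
Your proposal is correct and follows the same overall route as the paper: reduce \mss to Set Cover via Lemma~\ref{bubailemma4} and invoke the greedy \( H_{|I|} \)-approximation with \( |I| = n_1 \le n \). The substantive difference is in how the two arguments handle the direction \( \mathrm{OPT}_{\mathsf{SC}} \le \mathrm{OPT}_{\mathsf{MSS}} \). The paper's proof simply asserts that ``the reduction preserves the approximation guarantee,'' but Lemma~\ref{bubailemma4} as stated only converts selective subsets \emph{whose vertices correspond to sets}, i.e.\ subsets of \( B^{all} \), into set covers; it says nothing about an optimal selective subset that places vertices outside \( B^{all} \), deep inside a block. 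Your projection argument closes exactly this hole: every \( a_j \in B_{i,1} \) has a differently colored neighbour, so \( d(a_j, V(G)\setminus V_\ell) = 1 \) and the selectivity condition forces some same-colored vertex of \( M \) at hop-distance at most \( 1 \) from \( a_j \), which necessarily lies in \( B_{i,1} \cup B_{i,2} \); collecting one such protecting vertex per boundary vertex yields \( M' \subseteq M \cap B^{all} \) with \( |M'| \le |M| \), and the inequality \( d(v,m) \le d(v,a^*) + 1 = d(v, V(G)\setminus V_\ell) \) shows \( M' \) remains selective for every interior vertex \( v \). So while algorithmically identical, your write-up supplies a lemma (an optimal solution may be assumed to lie in \( B^{all} \)) that the paper's own proof implicitly relies on but never establishes; this makes your version the more complete of the two.
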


\begin{proof}
By Lemma~\ref{bubailemma4}, the MSS problem reduces to a Set Cover instance where the universe size is \( n_1 \leq n \), with \( n = |V(G)| \).

The greedy algorithm for Set Cover achieves an \( \mathcal{O}(\log n_1) \)-approximation. Since \( n_1 \leq n \), it follows that $\log n_1 \leq \log n$,
and therefore the approximation factor becomes \( \mathcal{O}(\log n) \) in terms of the original graph size.

As the reduction preserves the approximation guarantee, this yields an \( \mathcal{O}(\log n) \)-approximation algorithm for the \mss problem in general graphs.
\end{proof}

\section{NP-Hardness of \mss in Planar Graphs}\label{section4}
We reduce an instance of the \textit{Planar Rectilinear Monotone 3-SAT} (PRM-3SAT) problem, denoted by \(\theta\), to a bichromatic planar graph, which forms the basis of our hardness construction. To set the stage, we begin with a formal definition of PRM-3SAT.

\begin{definition}
A \emph{Planar Rectilinear Monotone 3-SAT} (PRM-3SAT) formula \(\theta\) is a constrained version of the classical 3-SAT problem, where each clause contains either only positive literals or only negative literals (i.e., the formula is \emph{monotone}). The formula \(\theta\) can be embedded in the plane using a rectilinear planar layout \(\xi\) in \(\mathbb{R}^2\), which satisfies the following properties (see Figure \ref{fig:PRM}):
\begin{enumerate}
    \item Variables and clauses are represented in $\xi$ as axis-aligned squares and rectangles, respectively.
    \item All squares corresponding to variables are of equal size and are placed along the $x$-axis.
    \item All clause rectangles have the same height, although their widths may vary.
    \item Rectangles for positive clauses are placed above the $x$-axis, while those for negative clauses are placed below.
    \item Variables are connected to their associated clauses by vertical line segments, called \emph{clause-variable connecting paths} (CVC-paths).
    \item The endpoints of all CVC-paths lie on lattice points in the embedding $\xi$.
\end{enumerate}
\end{definition}

 \begin{figure}[t]
\centering
\includegraphics[width=11cm]{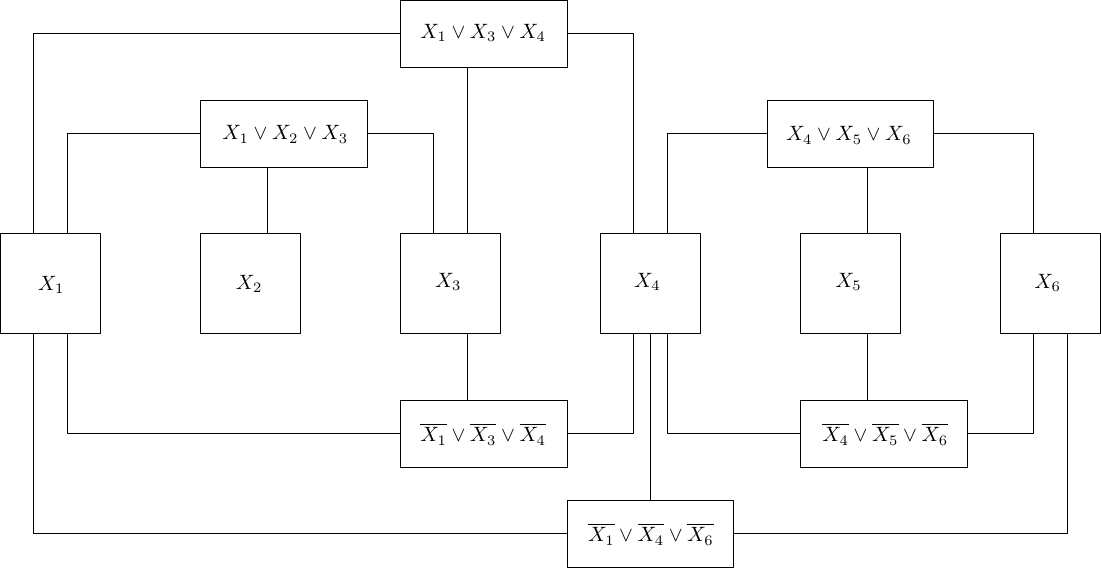}
\caption{\label{fig:PRM} An example of Planar Rectilinear Monotone 3SAT (PRM-3SAT)}
\end{figure}

The work of~\cite{PRM-3SAT} shows that PRM-3SAT is \(\npc\), even under these geometric and structural restrictions. Moreover, it is also shown that, given a PRM-3SAT formula \(\theta\), one can construct the corresponding rectilinear planar embedding \(\xi\) in polynomial time. This makes PRM-3SAT an ideal starting point for reductions to geometric and graph-theoretic problems involving planarity and monotonicity.

 \begin{figure}[t]
\centering
\includegraphics[width=11cm]{}
\caption{\label{fig:1L} Reduction from PRM-3SAT to \mss in planar graph. Colors: \emph{blue= square}, \emph{red= disk}. The SAT expression is $\theta=c_1\wedge c_2\wedge c_3\wedge c_4\wedge c_5\wedge c_6$. The clauses are $c_1=(x_1\vee x_2\vee x_3), c_2=(x_1\vee x_3\vee x_4), c_3=(x_4\vee x_5\vee x_6), c_4=(\overline{x_1}\vee \overline{x_{3}}\vee \overline{x_4}), c_5=(\overline{x_1}\vee \overline{x_4}\vee\overline{x_6}), c_6=(\overline{x_4}\vee \overline{x_5}\vee\overline{x_6})$ with variables $x_1=x_2=x_6=1$ and $x_3=x_4=x_5=0$. The vertices inside small circles are in the selective subset.}
\end{figure}

\textbf{Reduction.} We embed an instance of PRM-3SAT formula $\theta$ with $n$ variables ${x_1, x_2, \dots , x_n}$ and $m$ clauses ${c_1, c_2, \dots, c_m}$ into a bichromatic planar graph $G=(V(G),E(G))$ (see Figure \ref{fig:1L}).

\textbf{Variable gadget:} For each variable \( x_i \) ($1\leq i\leq n$), we construct a variable gadget \( X_i \) shown as a brown dotted rectangle in Figure~\ref{fig:1L}, as follows: A path \( (x_{i,1}, x_{i,2}, x_{i,3}) \) of length two, where \( C(x_{i,1}) = \emph{red} \) and \( C(x_{i,2}) = C(x_{i,3}) = \emph{blue} \), is called the \emph{positive literal path} of \( x_i \). Similarly, a path \( (\overline{x_{i,1}}, \overline{x_{i,2}}, \overline{x_{i,3}}) \) of length two, where \( C(\overline{x_{i,1}}) = \emph{red} \) and \( C(\overline{x_{i,2}}) = C(\overline{x_{i,3}}) = \emph{blue} \), is called the \emph{negative literal path} of \( x_i \).

The vertex \( x_{i,1} \) is adjacent to \( \overline{x_{i,1}} \), and \( x_{i,2} \), \( x_{i,3} \) are adjacent to \( \overline{x_{i,3}} \), \( \overline{x_{i,2}} \), respectively. Additionally, \( x_{i,3} \) is adjacent to \( \overline{x_{i,3}} \).

\textbf{Clause gadget:} For each clause $c_j$ ($1\leq j\leq m$), the clause gadget $C_j$ is as follows: A path $(c_{j,1}, c_{j,2}, c_{j,3})$ of length two, where $C(c_{j,1})=C(c_{j,2})=$ \emph{red} and $C(c_{j,3})=$ \emph{blue}, is called \emph{clause path} for the clause $c_j$.

If a clause \( c_j \) consists of three positive literals \( x_i, x_l, x_t \) (i.e., \( c_j = (x_i \vee x_l \vee x_t) \)), then the vertex \( c_{j,3} \) adjacent to \( x_{i,3} \), \( x_{l,3} \), and \( x_{t,3} \) from their respective positive literal paths. If \( c_j \) has three negative literals \( \overline{x_i}, \overline{x_l}, \overline{x_t} \), then \( c_{j,3} \) is adjacent to \( \overline{x_{i,3}}, \overline{x_{l,3}}, \overline{x_{t,3}} \) from the corresponding negative paths as shown in the Figure \ref{fig:1L}.

The construction of the graph $G$ is now complete. Notably, $G$ remains planar because its embedding (Figure \ref{fig:1L}) closely resembles the PRM-3SAT embedding described in the Figure ~\ref{fig:PRM}. Thus, for $n$ variables and $m$ clauses, the bichromatic planar graph $G$ contains $6n+3m$ vertices and $8n+5m$ edges. We set $V(G)=V_{r}\cup V_{b}$, where $V_r$ is the set of \emph{red} vertices and $V_b$ is the set of \emph{blue} vertices.

\begin{lemma}{\label{mannalemma24}}
     $\theta$ is satisfied if and only if $G$ has a selective subset of size $2n+m$. 
\end{lemma}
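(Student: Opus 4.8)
The plan is to prove the two directions of the equivalence separately, exhibiting an explicit selective subset of size $2n+m$ from a satisfying assignment, and conversely extracting a satisfying assignment from any selective subset of that size. The guiding principle throughout is Observation~\ref{obs3}: since blocks are independent in the solution, I only need to ensure that every vertex selects a same-colored nearest neighbor within its own block, and I can analyze the variable gadgets and clause gadgets essentially locally, coupled only through the adjacencies between literal vertices $x_{i,3}/\overline{x_{i,3}}$ and clause vertices $c_{j,3}$.

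\textbf{Forward direction ($\theta$ satisfiable $\Rightarrow$ selective subset of size $2n+m$).} Given a satisfying assignment, I would build $M$ as follows. In each variable gadget $X_i$, I select exactly two vertices, choosing which two according to whether $x_i$ is set true or false: the idea is that the \emph{red} endpoint and one \emph{blue} vertex on the ``satisfied'' literal path are forced into $M$ so that the red vertices $x_{i,1},\overline{x_{i,1}}$ each retain a same-colored nearest neighbor despite being adjacent to the oppositely-placed gadget structure, while the blue paths self-certify. This accounts for $2n$ vertices. In each clause gadget $C_j$ I select exactly one vertex, namely $c_{j,3}$ (the \emph{blue} vertex), giving $m$ more, for a total of $2n+m$. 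The crux is verifying the clause-path red vertices $c_{j,1},c_{j,2}$: each must have a red nearest neighbor in $M\cup(V(G)\setminus V_b)$. Because the clause is satisfied, at least one of its three literal vertices $x_{i,3}$ (or $\overline{x_{i,3}}$) adjacent to $c_{j,3}$ is \emph{not} selected on that side in a way that would provide a closer blue competitor; I must check the hop-distances carefully so that $c_{j,2}$ (red) is not ``captured'' by the blue $c_{j,3}$ before reaching a red vertex, and that the satisfied literal supplies the needed nearest red neighbor at the correct distance.

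\textbf{Reverse direction (selective subset of size $2n+m$ $\Rightarrow$ $\theta$ satisfiable).} By Lemma~\ref{bubailemma101}, every block contains at least one selected vertex. I would first argue a \emph{lower bound}: each variable gadget forces at least two selected vertices and each clause gadget forces at least one, so any selective subset has size $\geq 2n+m$; hence a subset of size exactly $2n+m$ must contain \emph{precisely} two vertices per variable gadget and exactly one per clause gadget, with no slack. This rigidity is the key structural fact. From the forced single choice in each clause gadget and the forced two choices in each variable gadget, I define the truth assignment by reading off which literal path of $x_i$ received its ``active'' selection. The final step shows this assignment satisfies every clause: the red vertices $c_{j,1},c_{j,2}$ of clause $C_j$ can only obtain a same-colored nearest neighbor if one of the adjacent literal vertices is selected appropriately, which forces the corresponding variable to the value making $c_j$ true.

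\textbf{Main obstacle.} I expect the hardest part to be the tight distance bookkeeping inside and across gadgets—specifically, proving that the red clause vertices $c_{j,1},c_{j,2}$ fail to be selectively covered precisely when the clause is unsatisfied, and that the inter-gadget edges (the adjacencies $x_{i,1}\!-\!\overline{x_{i,1}}$, $x_{i,3}\!-\!\overline{x_{i,3}}$, $x_{i,2}\!-\!\overline{x_{i,3}}$, $x_{i,3}\!-\!\overline{x_{i,2}}$, and the literal-to-clause edges) do not create unintended shorter paths that let a vertex cheat by selecting a same-colored neighbor through a different block. Establishing the exact per-gadget lower bounds of $2$ and $1$, and confirming no configuration of size $2n+m$ can deviate from the intended structure, is where the argument must be airtight; the monotonicity of the clauses and the deliberate two-path design of each variable gadget are what make this rigidity hold.
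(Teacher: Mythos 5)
Your high-level architecture (explicit construction in the forward direction; a per-gadget lower bound plus rigidity in the reverse direction) is the same as the paper's, but the gadget-level details of your proposal contain a genuine error in both directions, centered on the clause gadget. In the forward direction you select $c_{j,3}$ (the \emph{blue} clause vertex) as the single chosen vertex per clause gadget. This cannot work: the pair $\{c_{j,1},c_{j,2}\}$ is a monochromatic red block, so by Lemma~\ref{bubailemma101} every selective subset must contain at least one of these two vertices, and your $M$ contains neither. Concretely, $c_{j,2}$ is adjacent to the blue $c_{j,3}$, so its nearest neighbor in $M\cup(V(G)\setminus V_r)$ lies at distance $1$; the only red vertices within distance $1$ are $c_{j,1}$ and $c_{j,2}$ themselves, while the red vertices your $M$ does contain (some $x_{i,1}$ or $\overline{x_{i,1}}$) are at distance at least $4$. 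The paper instead puts the \emph{red} vertex $c_{j,1}$ into $M$: then $c_{j,2}$ is covered by $c_{j,1}$, and $c_{j,3}$ is covered---this is exactly where satisfiability enters---by the blue literal vertex $x_{i,3}$ (or $\overline{x_{i,3}}$) of a true literal, which is adjacent to $c_{j,3}$ and selected in the variable gadget.

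In the reverse direction you mis-attribute the mechanism that forces clauses to be satisfied. You claim the red vertices $c_{j,1},c_{j,2}$ ``can only obtain a same-colored nearest neighbor if one of the adjacent literal vertices is selected appropriately.'' But $c_{j,1}$ and $c_{j,2}$ are not adjacent to any literal vertex (only $c_{j,3}$ is), and they are covered internally by whichever of the pair the block constraint forces into $M$, completely independently of the truth assignment. The vertex that does the logical work is $c_{j,3}$: it is blue and adjacent to the red $c_{j,2}$, so it needs a blue vertex of $M$ at distance $1$; since the budget $2n+m$ is exhausted by the $n$ red variable blocks, the $n$ per-gadget blue requirements (forced by $x_{i,2}$ and $\overline{x_{i,2}}$, whose blue neighbors all lie inside the gadget), and the $m$ red clause blocks, we get $c_{j,3}\notin M$, hence one of $x_{i,3},x_{l,3},x_{t,3}$ must lie in $M$---and that is what yields the satisfying assignment. (A minor slip in the same vein: for red vertices the comparison set is $M\cup(V(G)\setminus V_r)$, not $M\cup(V(G)\setminus V_b)$ as you wrote.) With the clause-gadget selection switched from $c_{j,3}$ to $c_{j,1}$ and the roles of $c_{j,3}$ versus $c_{j,1},c_{j,2}$ corrected, your outline would essentially coincide with the paper's proof.
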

\begin{proof}
Suppose $\theta$ is satisfiable. We construct a selective subset $M$ of size $2n+m$. Initially, $M:=\emptyset$. For each variable $x_i$, either $x_1=1$ or $\overline{x_i}=1$. If $x_i=1$, then include $x_{i,1}$ and $x_{i,3}$ in $M$; otherwise, $\overline{x_{i,1}}, \overline{x_{i,3}}\in M$. Thus, for $n$ variables, $M$ contains $2n$ vertices. Additionally, for each clause $c_j$, $M$ includes $c_{j,1}$. Therefore, $\lvert M\rvert =2n+m$.

Now we show that $M$ is a selective subset. In the variable gadget $X_i$, if $x_{i,1}, x_{i,3}\in S$, then $x_{i,2}$, $\overline{x_{i,2}}$ and $\overline{x_{i,3}}$ all have $x_{i,3}$ as their nearest neighbor of the same color in $M\cup (V(G)\backslash V_b)$ because $d(x_{i,2},x_{i,3})=d(\overline{x_{i,2}},x_{i,3})=d(\overline{x_{i,3}},x_{i,3})=1$, and $C(x_{i,2})=C(\overline{x_{i,2}})=C(\overline{x_{i,3}})=C(x_{i,3})$. Meanwhile $\overline{x_{i,1}}$ has $x_{i,1}$ as its nearest neighbor of the same color in $M\cup (V(G)\backslash V_r)$ because $d(\overline{x_{i,1}},x_{i,1})=1$ and $C(\overline{x_{i,1}})=C(x_{i,1})$. Similarly, if $\overline{x_{i,1}}, \overline{x_{i,3}}\in M$ then $x_{i,2}$, $x_{i,3}$ and $\overline{x_{i,2}}$ all have $\overline{x_{i,3}}$ as their nearest neighbor of the same color in $M\cup (V(G)\backslash V_b)$, and $x_{i,1}$ has $\overline{x_{i,1}}$ as its nearest neighbor of the same color in $M\cup (V(G)\backslash V_r)$. 

For each clause gadget $C_j$, since $c_{j,1}\in M$, the vertex $c_{j,2}$ has $c_{j,1}$ as its nearest neighbor of the same color in $M\cup (V(G)\backslash V_r)$. Now, we verify that at least one nearest neighbor of $c_{j,3}$ in $M\cup (V(G)\backslash V_b)$ has the same color. Suppose that $c_j=(x_i\vee x_l\vee x_t)$. Since each $c_j=1$ (because $\theta$ is satisfiable), at least one of $x_{i,3}$, $x_{l,3}$ and $x_{t,3}$ must be in $M$, ensuring that at least one nearest neighbor of $c_{j,3}$ in $M\cup (V(G)\backslash V_b)$ has color $C(c_{j,3})$ because $d(c_{j,3},x_{i,3})=d(c_{j,3},x_{l,3})=d(c_{j,3},x_{t,3})=1$ and $C(c_{j,3})=C(x_{i,3})=C(x_{l,3})=C(x_{t,3})$. Hence, $M$ is a selective subset of size $2n+m$.

Conversely, suppose $M$ is a selective subset of size $2n+m$. We determine which vertices are in $M$. Since $\{x_{i,1}, \overline{x_{i,1}}\}$ is a block, Lemma \ref{bubailemma101} implies that at least one vertex of the block must be in $M$. If $x_{i,1}\in M$, then at least one of $x_{i,2}$, $x_{i,3}$ $\overline{x_{i,2}}$, $\overline{x_{i,3}}$ must be in $M$; otherwise, $x_{i,2}$ and $\overline{x_{i,2}}$ will not have a nearest neighbor of their own color in $M\cup (V(G)\backslash V_b)$. We include $x_{i,3}$ in $M$ because $c_{j,3}$ (if $c_j=(x_i\vee x_l\vee x_t)$) will have $x_{i,3}$ as its nearest neighbor of the same color in $M\cup (V(G)\backslash V_b)$. Similarly, if $\overline{x_{i,1}}\in M$, then $\overline{x_{i,3}}\in M$. If we choose $x_{i,1}$ and $x_{i,3}$ in $M$, then put $x_i=1$; otherwise, $x_i=0$. This ensure that for $n$ variable gadgets, $M$ contains $2n$ vertices. 

For each clause $c_j$, the set $\{c_{j,1}, c_{j,2}\}$ forms a block and by Lemma \ref{bubailemma101}, at least one vertex of the block $\{c_{j,1}, c_{j,2}\}$ must be in $M$. We include $c_{j,1}\in M$, contributing one vertex per clause, thus $m$ vertices in total. Since $\lvert M\rvert=2n+m$, no other vertices are included. Note that $c_{j,2}$ has $c_{j,1}$ as its nearest neighbor of the same color in $M\cup (V(G)\setminus V_{r})$,

Now consider a clause $c_j=(x_i\vee x_l\vee x_t)$. If none of the pairs $\{x_{i,1}, x_{i,3}\}$, $\{x_{l,1}, x_{l,3}\}$, or $\{x_{t,1}, x_{t,3}\}$ are in $M$, then $c_{j,2}$ becomes the nearest neighbor of $c_{j,3}$ in $M\cup (V(G)\backslash V_b)$, violating the definition of selective subset because $C(c_{j,3})\neq C(c_{j,2})$. Thus, at least one of these pairs must be included in $M$. If $\{x_{i,1}, x_{i,3}\}\in M$, assign $x_i=1$; otherwise, assign $\overline{x_i}=1$ that we have already assigned. The same argument applies to $\{x_{l,1}, x_{l,3}\}$ and $\{x_{t,1}, x_{t,3}\}$. This ensures that at least one literal in each clause is true, proving that $\theta$ is satisfiable.
\end{proof}

\begin{theorem}
Finding a minimum selective subset is $\npc$ for planar graphs with two colors.
\end{theorem}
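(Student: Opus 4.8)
The plan is to establish the two requirements for \npc separately: containment in \np and hardness, restricted to the class of planar graphs with two colors. For membership, I would observe that the Selective Subset Problem on Graphs is verifiable in polynomial time. Given a candidate subset $S$ with $|S|\le s$, first compute all pairwise hop-distances $d(u,v)$ by running breadth-first search from each vertex, at a cost of $\order{|V(G)|(|V(G)|+|E(G)|)}$. Then, for every vertex $v\in V_\ell$, form the set $S\cup(V(G)\setminus V_\ell)$, determine the value $d(v,S\cup(V(G)\setminus V_\ell))$, and check whether some nearest neighbor achieving this distance has color $\ell$. If every vertex passes this test, $S$ is a valid selective subset; the whole check is polynomial, so the problem lies in \np. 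Note that this membership argument is color- and class-agnostic, so it covers the restricted instances as well.

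For hardness, I would invoke the reduction already constructed from PRM-3SAT. Given a formula $\theta$ with $n$ variables and $m$ clauses together with its rectilinear planar embedding $\xi$ (computable in polynomial time by~\cite{PRM-3SAT}), the bichromatic graph $G$ with $6n+3m$ vertices and $8n+5m$ edges is assembled in time polynomial in $n+m$, and its embedding faithfully mirrors $\xi$, so $G$ is planar and uses exactly two colors. By Lemma~\ref{mannalemma24}, $\theta$ is satisfiable if and only if $G$ admits a selective subset of size $2n+m$. Setting the budget $s=2n+m$ therefore yields a polynomial-time many-one reduction from PRM-3SAT to the Selective Subset Problem restricted to planar graphs with two colors. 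Since PRM-3SAT is \npc, this shows the target problem is \nph in the claimed restricted setting, and combining with membership gives \npc.

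The conceptual heavy lifting---the design of the variable and clause gadgets and the exactness of the budget $2n+m$ in both directions---has already been discharged by Lemma~\ref{mannalemma24}. Consequently, the only points demanding care at the level of the theorem are routine: confirming that the reduction is genuinely polynomial-time, and re-verifying the planarity and the two-color claims for $G$. Both of the latter follow because the drawing of $G$ is obtained by a local, bounded-size expansion of each variable square and clause rectangle of $\xi$ into its gadget, replacing CVC-paths by the corresponding adjacencies without introducing crossings, so planarity is inherited directly from the PRM-3SAT embedding. The main obstacle, in short, lies behind us in the lemma rather than in the theorem itself; the theorem is essentially an assembly of the membership check, the polynomial-time construction, and the correctness equivalence already in hand.
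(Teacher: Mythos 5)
Your proof is correct and follows essentially the same route as the paper: membership in \np{} via polynomial-time verification, and hardness via the PRM-3SAT reduction together with Lemma~\ref{mannalemma24} with budget $s = 2n+m$. You merely spell out the verification algorithm and the planarity/two-color checks that the paper's proof leaves implicit as ``easy to see.''
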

\begin{proof}
It is easy to see that the problem is in \np. As for $\npc$, Lemmas \ref{mannalemma24} establishes a relationship between $\theta$ and the size of the selective subset of $G$ in polynomial time. Therefore, \mss is $\npc$ in planar graphs.
\end{proof}
\textbf{Remark.} The above reduction remains valid even if the pair of vertices \( \{c_{j,1}, c_{j,2}\} \) in each clause gadget is assigned a distinct (except blue color), unique color not shared across gadgets. That is, we may assign a different color to each such pair for \( 1 \leq j \leq m \), and the reduction still preserves the equivalence between satisfying the PRM-3SAT formula \( \theta \) and the existence of a selective subset of size \( 2n + m \).

\section{Linear-time Algorithm of \mss in Trees}\label{section5}
We now describe a linear-time algorithm for computing a \mss{} in a tree. The key idea relies on Observation~\ref{obs3}, which states that the problem can be solved independently within each block. This significantly simplifies the structure of the problem and allows us to focus on local solutions. Therefore, to streamline both notation and presentation, we first consider the problem restricted to a single block.

Let \( T = (V(T), E(T)) \) be a tree rooted at a vertex \( r \), with \( |V(T)| = n \) vertices and a total of \( c \) colors. The tree is assumed to be connected and acyclic, as is standard. For each block \( B \subseteq V(T) \), we compute a minimum selective subset \( M_B \) using the strategy outlined in Lemma~\ref{bubailemma101}. The intuition is that within each block, local consistency can be enforced independently, and by handling each block separately, we maintain correctness globally.

The overall algorithm proceeds in two main phases. The first phase is \emph{initialization}, described in Algorithm~\ref{alg:0x}, where we preprocess the tree, organize its structure for efficient traversal, and gather relevant information needed for selection. The second phase is \emph{selection}, given in Algorithm~\ref{alg:02}, where we apply a bottom-up dynamic programming approach to determine the optimal selective subset within each block. 

Finally, the complete solution to the \mss{} problem on the entire tree is obtained by taking the union of the selective subsets computed for all the blocks. This modular approach not only ensures linear time complexity but also makes the algorithm scalable and easy to implement.

\begin{algorithm}
\caption{Initialization Phase for Computing a Minimum Selective Subset in Trees}\label{alg:0x}
\KwIn{A tree $T=(V(T), E(T))$ with $\lvert V(T)\rvert =n$ and $c$ colors. List of blocks $B_1, B_2, \dots, B_k$.}
\KwOut{Important Sets $B^{1}$, $B^{2}$, $B^{all}$, for each block $B$.}

\textbf{Initialization:} \\
\For{each block $B$}{

initialize $B^{1} \gets \emptyset$ and $B^{2} \gets \emptyset$.

\BlankLine

\textbf{Phase 1: Identifying Important Vertices} \\
\For{each vertex $v \in B$}{
    \If{$v$ is adjacent to at least one vertex of a different color}{
        $B^{1} \gets B^{1} \cup \{v\}$;
    }
}

\For{each vertex $u \in B\setminus B^1$}{
    \If{$u$ is adjacent to at least one vertex in $B^{1}$}{
        $B^{2} \gets B^{2} \cup \{u\}$;
    }
}

\BlankLine

Define:
    $B^{all} \gets B^{1} \cup B^{2}$;
}

\BlankLine

\end{algorithm}
The algorithm follows these steps (see Figure ~\ref{fig:1H}):

    \textbf{Initialization} (Algorithm \ref{alg:0x}).
    \begin{itemize}
        \item Start with the empty sets: $M_B:=\emptyset$, $B^1:=\emptyset$, and $B^2:=\emptyset$.
        \item For each vertex $v\in B$, if $v$ is adjacent to a vertex of a different color, include $v$ in $B^1$.
        \item For each vertex $v\in B\backslash B^1$, if $v$ is adjacent to at least one vertex in $B^1$, then include $v$ in $B^2$.
        \item Define \( B^{all} = B^1 \cup B^2 \) (note that \( B^1 \cap B^2 = \emptyset \)).
    \end{itemize}
   \textbf{Subtree Formation} (refer to lines 1-4 in Algorithm \ref{alg:02}).
      \begin{itemize}
          \item Since $B^{all}\subseteq B$, $B^{all}$ induces one or more connected induced subtrees in $T$, and each subtree consists of vertices of the same color because $B^{all}\subseteq B$. We consider such induced connected maximal subtrees.
          \item Let $T^{v_{1}}$, $T^{v_{2}}$, \dots, $T^{v_{t}}$ be such connected maximal subtrees in $T$ formed by the vertices of $B^{all}$ and call their roots $v_{1}, v_{2}, \ldots, v_{t}$, respectively. Note that, since the whole tree is rooted, a root is naturally defined for each subtree.
      \end{itemize}
  \textbf{Selection Process} (refer to lines 5-31 in Algorithm \ref{alg:02}).
       \begin{itemize}
       \item For each subtree \( T^{v_{j}} \) (where \( 1 \leq j \leq t \)), start from the \textit{lowest-level vertex} \( u \). A lowest-level vertex of a tree is a vertex that is farthest from the root.
       \item $u$ must belong to either $B^1$ or $B^2$.
       \item If $u\in B^2$:
           \begin{itemize}
               \item Remove $u$ from $B^2$, $B^{all}$, and $T^{v_{j}}$, as it must be adjacent to a vertex in $B^1$. 
           \end{itemize}
       
       \item If $u\in B^1$:
           \begin{itemize}
               \item If $u$ has a parent (say $v$) in $T^{v_{j}}$, add $v$ to $M_B$ and remove $v$ along its children from $B^1$, $B^2$, $B^{all}$, and $T^{v_{j}}$.
               \item If $u$ has no parent in $T^{v_{j}}$, add $u$ to $M_B$ and remove it from $B^1$, $B^{all}$, and $T^{v_{j}}$.
               \item If the grandparent of \( u \) (if it exists in $T^{v_{j}}$) belongs to \( B^1 \), move it from \( B^1\) to \( B^2 \), and update \( B^{all} \) accordingly.
           \end{itemize}
        \item Repeat the \textbf{Selection Process} until \( T^{v_{j}} \) becomes empty. Once \( T^{v_{j}} \) has no vertices, increase \( j \leftarrow j+1 \) and repeat the \textbf{Selection Process} until \( j = t \).
           \end{itemize}
    We apply the algorithm to each block \( B \) and obtain subsets \( M_B \). The final solution is \( M = \bigcup_{B \in T} M_B \). We now prove that \( M \) is not just selective, but also minimum.

\begin{algorithm}
\caption{Algorithm for Minimum Selective Subset of Trees}\label{alg:02}
\KwIn{A tree \( T = (V(T), E(T)) \) with \( \lvert V(T) \rvert = n \) vertices and a total of \( c \) colors. List of blocks \( B_1, B_2, \dots, B_k \). The subsets generated by Algorithm~\ref{alg:0x} are \( B^{1} \), \( B^{2} \), and \( B^{all} \) for each block $B$.}
\KwOut{A Minimum Selective Subset $M$.}

\textbf{subtree Formation:} \\

\For{each block $B$}{
    Identify connected maximal subtrees within the vertex of $B^{all}$:
    \[
    T^{v_{1}}, T^{v_{2}}, \dots, T^{v_{t}}
    \]
    where each subtree is rooted at $v_{1}, v_{2}, \dots, v_{t}$.
}

\BlankLine

\textbf{Selection Process: (Constructing the Selective Subset):} \\

\For{each block $B$}{
Set $j \gets 1$; $M_B \gets \emptyset$; \\
\While{$j\leq t$}{

    \While{$T^{v_{j}}$ has at least one vertex}{
        Select a lowest level vertex $u$ of $T^{v_j}$;

        \If{$u \in B^2$}{
            Remove $u$ from $B^{2}$, $B^{all}$, and $T^{v_{j}}$;
        }
        \Else{
            \If{$u$ has a parent (say $v$) in $T^{v_{j}}$}{
                Add the $v$ to $M_B$;
                
                Remove $v$ along with its children from $B^{1}$, $B^{2}$, $B^{all}$, and $T^{v_{j}}$;
            }
            \Else{
                Add $u$ to $M_B$;

                Remove $u$ from $B^{1}$, $B^{all}$, and $T^{v_{j}}$;
            }
            \If{the grandparent of $u$ (if exists) is in $B^{1}$}{
                Move it from $B^{1}$ to \( B^{2} \), and update \( B^{all} \) accordingly.
            }
        }
    }
    $j\gets j+1;$
}
\BlankLine
}
\Return $M\gets \bigcup_{B\in T}M_B$;
\end{algorithm}


\begin{figure}[t]
\centering
\includegraphics[width=10cm]{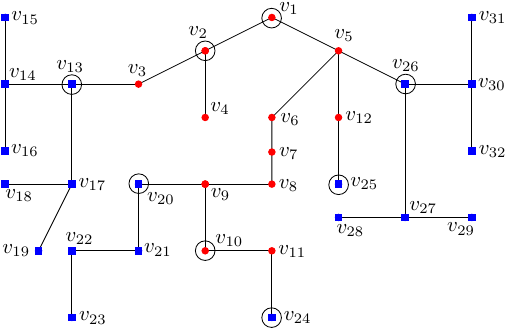}
\caption{Colors: \emph{blue= square}, \emph{red= disk}. $r=$ $v_1$ is the root of the tree $T$. The blocks are $B_1=\{v_1,$ $\dots,$ $ v_{12}\}$, $B_2=$ $\{v_{13},$ $\dots, $ $v_{19}\}$, $B_{3}=$ $\{v_{20},$ $\dots,$ $v_{23}\}$, $B_{4}$ $=\{v_{24}\}$, $B_{5}=$ $\{v_{25}\}$, $B_{6}$ $=\{v_{26}$ $,\dots,$ $v_{32}\}$. $B_1^{1}$ $=\{v_3,$ $ v_{5}, $ $v_{9}, $ $v_{11}, $ $v_{12}\}$, $B_1^{2}=\{v_1, v_{2}, v_{6}, v_{8}, v_{10}\}$,  $B_2^{1}=\{v_{13}\}$, $B_2^{2}=\{v_{14},v_{17}\}$, $B_3^{1}=\{v_{20}\}$, $B_3^{2}=\{v_{21}\}$, $B_4^{1}=\{v_{24}\}$, $B_4^{2}=\emptyset$, $B_5^{1}=\{v_{25}\}$, $B_5^{2}=\emptyset$, $B_6^{1}=\{v_{26}\}$, $B_6^{2}=\{v_{27},v_{30}\}$. $T_{1}^{v_1}=\{v_1,v_2,v_3,v_{5},v_{6},v_{12}\}$, $T_{1}^{v_{8}}=\{v_{8},v_{9}, v_{10},v_{11}\}$, $T_{2}^{v_{13}}=\{v_{13},v_{14}, v_{17}\}$, $T_{3}^{v_{20}}=\{v_{20},v_{21}\}$, $T_{4}^{v_{24}}=\{v_{24}\}$, $T_{5}^{v_{25}}=\{v_{25}\}$, $T_{6}^{v_{26}}=\{v_{26},v_{27},v_{30}\}$. $\mss =\{v_2, v_{1},v_{10},v_{13}, v_{20}, v_{24}, v_{25},v_{26}\}$. The vertices inside small circles are in the minimum selective subset.}\label{fig:1H}
\end{figure}
\begin{lemma}\label{treelemma}
For any block $B$ and any vertex \( u \in B^{1} \), either \( u \in M \) or at least one adjacent vertex of \( u \) in \( B^{\text{all}} \) must belong to \( M \).
\end{lemma}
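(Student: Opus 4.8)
The plan is to fix a block $B$ and a vertex $u$ belonging to the \emph{initial} set $B^1$ produced by Algorithm~\ref{alg:0x} (the sets $B^1,B^2,B^{all}$ referred to in the statement being these initial ones, before the selection loop mutates them), and to track the exact iteration at which $u$ is deleted from its subtree $T^{v_j}$ during the selection loop of Algorithm~\ref{alg:02}. Since the inner \textbf{while} loop runs until each $T^{v_j}$ is emptied, every vertex of $B^{all}$---and in particular $u$---is removed at some iteration, and the three deletion operations of the algorithm give a finite list of ways this can happen. I would argue that in each case $u$ is already \emph{covered}, meaning either $u\in M$ or some neighbor of $u$ lying in $B^{all}$ belongs to $M$.

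First I would record the one structural fact that drives the whole argument, concerning the reclassification step that moves the grandparent from $B^1$ to $B^2$. This step fires only inside the branch where the current leaf $u'$ satisfies $u'\in B^1$ and $u'$ has a parent $v$ in $T^{v_j}$; in that branch $v$ is added to $M_B$, and the grandparent being moved is precisely the parent $g$ of $v$. Hence whenever a vertex $g$ is shifted from $B^1$ to $B^2$, its child $v$ was inserted into $M_B$ in the \emph{same} iteration, and $v$ is a neighbor of $g$ that lies in $B^{all}$. Consequently a $B^1$-vertex that is ever demoted to $B^2$ is already covered at the instant of demotion, and it stays covered because vertices are never removed from $M$.

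With this in hand the case analysis is routine. If $u$ is deleted while still in $B^1$, then either it is selected as a leaf with no parent in $T^{v_j}$, in which case $u$ itself is added to $M_B$; or it is selected as a leaf with a parent $v$, in which case $v$ is added to $M_B$ and $v$ is a neighbor of $u$ in $B^{all}$; or it is deleted as a child of some selected vertex $v$ (its parent), which again is a neighbor of $u$ in $B^{all}$ placed in $M_B$. The only remaining possibility is that $u$ is deleted as a $B^2$-leaf, which can happen only after $u$ was demoted from $B^1$ to $B^2$; by the structural fact above, $u$ was already covered at that earlier demotion. Since every vertex ever appearing in $T^{v_j}$ was originally in $B^{all}$ (the subtrees are built from $B^{all}$ and only ever shrink), and the parent/child witnesses are genuine edges of $T$ inside the block, each witness indeed lies in $B^{all}$ and is adjacent to $u$. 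Restricting attention to a single block is justified by Observation~\ref{obs3}.

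I expect the demotion case to be the only delicate point: one must be certain that the grandparent rule never strips a boundary vertex of its sole chance at coverage. The key is the timing observation---the demotion of $g$ is inseparable from the insertion of its child $v$ into $M_B$---so no demoted vertex is ever left uncovered. Everything else is bookkeeping across the three deletion operations of Algorithm~\ref{alg:02}.
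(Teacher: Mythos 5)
Your proof is correct, and at heart it pursues the same strategy as the paper: follow the execution of Algorithm~\ref{alg:02} until either \( u \) or a \( B^{all} \)-neighbor of \( u \) is forced into \( M_B \). But the organization differs in a way that matters. The paper fixes the set \( X \) consisting of \( u \) and its adjacent vertices in \( B^{all} \) and argues by cases on which member of \( X \) first becomes a lowest-level vertex; its first case then simply asserts that when \( u \) becomes lowest-level it is still in \( B^1 \), which need not hold: \( u \) may have been demoted to \( B^2 \) by the grandparent rule, after which it is discarded as a \( B^2 \)-leaf with nothing added to \( M_B \) in that iteration. Your proof pivots instead on the iteration at which \( u \) itself is deleted, and it isolates exactly the invariant needed to handle that overlooked situation: a vertex is moved from \( B^1 \) to \( B^2 \) only in an iteration in which its child in \( T^{v_j} \) is inserted into \( M_B \), so every demoted vertex is already covered at demotion time, and coverage persists because \( M_B \) only grows. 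On this one delicate point your write-up is tighter than the paper's own proof. Two minor remarks: your reading that \( B^1 \) and \( B^{all} \) in the statement are the \emph{initial} sets produced by Algorithm~\ref{alg:0x} is the right (indeed the only sensible) interpretation, since all of these sets are empty once the selection loop terminates; and your enumeration of how \( u \) can be deleted ``while still in \( B^1 \)'' omits one mode---\( u \) being deleted as the \emph{selected parent} \( v \) of some \( B^1 \)-leaf child---but in that mode the algorithm places \( u \) in \( M_B \) outright, so the conclusion holds trivially and nothing in your argument breaks.
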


\begin{proof}
Let \( X \) be the set that contains \( u \) and all its adjacent vertices in \( B^{\text{all}} \). Suppose \( T^{v_{j}} \) is the subtree that contains the vertices of \( X \). 

By the process described in Algorithm~\ref{alg:02}, after a finite number of iterations, a vertex from \( X \) will become a lowest-level vertex of \( T^{v_{j}} \) (since we delete vertices in each iteration until the subtree becomes empty).

We analyze two cases:

\begin{itemize}
    \item If \( u \) becomes a lowest-level vertex, then its parent (if it exists in \( T^{v_{j}} \)) must be included in \( M_B \) because $u\in B^1$, meaning that an adjacent vertex of \( u \) from \( B^{all} \) is included in \( M_B \). If the parent does not exist, then \( u \) itself must be included in \( M_B \) according to Algorithm~\ref{alg:02} as $u\in B^1$.
    
    \item If a vertex \( v \in X \setminus \{u\} \) becomes a lowest-level vertex, we consider two subcases:
    \begin{itemize}
    \item If \( v \in B^1 \), then its parent must be \( u \), as \( u \) is the only adjacent vertex of \( v \) (as $v$ is lowest-level vertex), and \( u \) must be included in \( M_B \).
    \item If \( v \in B^2 \), then \( v \) is removed from the tree. In subsequent iterations, either \( u \) eventually becomes a lowest-level vertex (in which case we proceed as above), or one of its remaining neighbors becomes a lowest-level vertex in \( B^1 \), again forcing \( u \in M_B \). If all neighbors of \( u \) are in \( B^2 \) and get deleted, \( u \) itself becomes lowest-level, completing the process.
\end{itemize}

\end{itemize}

Combining all these cases, we conclude that \( X \cap M_B \neq \emptyset \). Thus, either \( u \in M \) or at least one adjacent vertex of \( u \) in \( B^{all} \) belongs to \( M \).
\end{proof}
%
%
%
\begin{lemma}\label{tree}
$M$ is a minimum selective subset of the tree $T$.
\end{lemma}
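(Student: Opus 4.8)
The plan is to prove two things: that $M$ is a valid selective subset, and that it is minimum. I would prove these in two separate parts, leveraging the block-independence established in Observation~\ref{obs3} so that it suffices to reason about a single block $B$ and its contribution $M_B$.

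\textbf{Correctness (selectivity).} First I would argue that $M$ is selective. By Observation~\ref{obs3} and Lemma~\ref{bubailemma101}, each block can be handled independently, so fix a block $B$ with color $\ell$ and take any vertex $v \in B$. The goal is to show $v$ has a nearest neighbor of its own color in $M \cup (V(T)\setminus V_\ell)$, and by the discussion following Lemma~\ref{bubailemma101} this reduces to showing $v$ has a same-colored nearest neighbor lying \emph{inside} $B$ that belongs to $M$. I would split on where $v$ sits. If $v \in B^1$, then Lemma~\ref{treelemma} applies directly: either $v \in M$ or an adjacent vertex of $v$ in $B^{all}$ is in $M$, and since any vertex of $B^1$ is adjacent to a differently-colored vertex (at hop-distance $1$), an in-block same-colored neighbor in $M$ at distance $1$ is exactly a valid nearest neighbor. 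If $v \in B^2$, then $v$ is adjacent to some $w \in B^1$; applying Lemma~\ref{treelemma} to $w$ gives a vertex of $M$ within distance two of $v$ inside $B$, and I would check this beats any competing differently-colored vertex. The remaining case is $v \in B \setminus B^{all}$: such a $v$ is surrounded entirely by same-colored vertices out to distance $2$, so its nearest differently-colored vertex is at distance $\geq 2$, and I would trace the shortest path from $v$ toward $B^1$ to locate a same-colored $M$-vertex that is at least as close. The key structural fact making all of this work is that $B^1$ forms a "boundary shell" of the block and $B^{all}$ its thickening, so every interior vertex reaches $M$ through same-colored vertices before reaching the block boundary.

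\textbf{Minimality.} For the lower bound I would establish that no selective subset can do better than $|M_B|$ on each block, again using block independence to combine. The natural strategy is a charging/packing argument: I would show that the vertices Algorithm~\ref{alg:02} adds to $M_B$ can be associated with a collection of "demands" that any selective subset must satisfy, and that these demands are pairwise independent so that each selective subset must pay at least one distinct vertex per demand. Concretely, each time the algorithm adds a vertex, it does so to cover some vertex of $B^1$ whose same-colored neighbourhood has not yet been served; I would argue (mirroring the necessity direction of Lemma~\ref{bubailemma4}, which already shows every $B^1$-vertex must be dominated within its block by an $M$-vertex) that the greedy bottom-up choice of adding the \emph{parent} $v$ rather than the leaf $u$ is locally optimal — it dominates a maximal set of as-yet-uncovered $B^1$ vertices — and that the subsequent deletion of $v$ together with its children removes exactly the vertices thereby satisfied, leaving the residual problem disjoint from what was just paid for. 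An exchange argument then shows any optimal selective subset can be transformed, without increasing its size, into one making the same choices as the algorithm.

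\textbf{Main obstacle.} I expect the minimality direction to be the hard part. Proving selectivity is largely bookkeeping over the distance-$1$ and distance-$2$ structure of $B^1$ and $B^2$, supported by the already-proved Lemma~\ref{treelemma}. But the lower bound requires arguing that the specific greedy rule — selecting the parent of the current lowest-level $B^1$-vertex and pruning its children — never overcounts, i.e.\ that it does not place two vertices where one optimally placed vertex of a competing solution would have sufficed. The subtlety is that a single well-chosen vertex might dominate several $B^1$-vertices simultaneously, so I must show the algorithm's deletions correctly account for exactly this sharing and that the resulting demands it charges against are genuinely independent. Making the exchange argument airtight — handling the interaction between the grandparent being demoted from $B^1$ to $B^2$ and the independence of the residual subtree — is where I would concentrate the effort, since a naive packing bound risks being off by the amount of such overlap.
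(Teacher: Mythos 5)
Your Step~1 (selectivity) is essentially the paper's own argument: both hinge on Lemma~\ref{treelemma}. The paper phrases it as a contradiction, following the shortest path from an allegedly unserved vertex $x$ to the nearest differently-colored vertex and applying Lemma~\ref{treelemma} to the last in-block vertex $v_l \in B^{1}$ on that path; your direct case split ($v \in B^{1}$, $v \in B^{2}$, $v \in B \setminus B^{all}$) covers the same ground and would work equally well.

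On minimality you genuinely diverge from the paper, and the comparison is instructive. The paper's Step~2 is an \emph{irredundancy} argument: it shows that no single vertex of $M$ can be deleted without leaving some $B^{1}$-vertex with no same-colored neighbor in $M_B \cap B^{all}$. Strictly speaking this only proves that $M$ is inclusion-minimal, not that it has minimum cardinality, so your instinct to prove a lower bound against an \emph{arbitrary} selective subset targets the logically stronger statement that the lemma actually asserts. The gap is that your proposal announces this bound rather than proving it: the pairwise independence of the demands, which is the entire content of the lower bound, is left as the flagged ``main obstacle,'' and the exchange argument is only invoked. The fix is a clean packing argument, no exchange needed. Charge each vertex added to $M_B$ to the lowest-level $B^{1}$-vertex $u$ that triggered its addition, so the number of triggers equals $\lvert M_B \rvert$. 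Any selective subset $S$ must intersect $N[u] \cap B$ for every trigger $u$: since $u \in B^{1}$ has a differently-colored neighbor at distance $1$, it needs a same-colored vertex of $S$ at distance at most $1$, which necessarily lies in $B$. Finally, two distinct triggers in the same block are at tree-distance at least $3$: when a trigger is processed, the algorithm deletes the selected parent $v$ together with all its children and demotes the grandparent from $B^{1}$ to $B^{2}$ (so it can never become a trigger), and in the parentless case the trigger is the root of its subtree, so any same-colored vertex at distance $\le 2$ reached through the differently-colored tree-parent lies in a different block. Hence the demand sets $N[u] \cap B$ are pairwise disjoint, giving $\lvert S \cap B \rvert \ge \lvert M_B \rvert$ for every block, and summing over blocks via Lemma~\ref{bubailemma101} and Observation~\ref{obs3} yields $\lvert S \rvert \ge \lvert M \rvert$. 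With that disjointness argument supplied, your route is complete and in fact yields a tighter proof of minimality than the one in the paper.
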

\begin{proof}
We prove this in two steps: First, we show that \( M \) is a selective subset. Then, we prove that \( M \) is of minimum size among all selective subsets.
    
\textbf{Step 1:} (Selective subset)  
Suppose, for contradiction, that \( M \) is not a selective subset. Then there exists a vertex \( x \in B \) (let \( x \in V_{\ell} \)) such that \( x \) has no nearest neighbor of the same color in \( M \cup (V(T) \setminus V_{\ell}) \). 

Let \( D \) be a block adjacent to \( B \), and let \( y \in D \) be a nearest neighbor of \( x \) in $D$. Let \( P = x v_1 v_2 \dots v_l y \) be the shortest path from \( x \) to \( y \), where \( v_1, \dots, v_l \in B \) and \( y \in D \). In particular, \( v_l \in B^{1} \). 

By Lemma \ref{treelemma}, either \( v_l \in M_{B} \) or at least one neighbor of \( v_l \) in \( B^{all} \) belongs to \( M_{B} \).  
\begin{itemize}
    \item If \( v_l \in M_{B} \), then \( v_l \) must be the nearest neighbor of \( x \) in \( M \cup (V(T) \setminus V_{\ell}) \), contradicting our assumption. 
    \item If an adjacent vertex \( z \in B^{all} \) of \( v_l \) belongs to \( M_{B} \), then \( d(x, z) \leq d(x, y) \). Since \( z \in M \cup (V(T) \setminus V_{\ell}) \) and \( C(z) = C(x) \), it again contradicts our assumption that \( x \) has no such nearest neighbor.
\end{itemize}

Thus, no such \( x \) exists, and \( M \) is indeed a selective subset.

\textbf{Step 2:} (Minimality)  
We now show that \( M \) is a minimum selective subset.

Consider a subtree \( T^{v_{j}} \) rooted at \( v_{j} \) in the block $B$, and let \( u \) be a lowest-level vertex in \( T^{v_{j}} \). If \( u \in B^2 \), then its unique parent \( v \) must belong to \( B^1 \), since \( u \in B^{2} \) and \( v \) is its only adjacent vertex in \( T^{v_{j}} \) by construction. Instead of including \( u \) in \( M_B \), we defer selection to either \( v \) or its parent \( w \) (if it exists) in subsequent iterations, for the following reason:

If \( v \) becomes the lowest-level vertex and we add its parent \( w \) to \( M_B \), then \( u \) satisfies \( d(u, w) \leq d(u, v') \), where \( v' \) is a vertex adjacent to \( v \) of a different color, and \( C(u) = C(v) \). This satisfies the selective subset condition for \( u \). A similar argument holds if \( w \) does not exist and we instead add \( v \) to \( M_B \). In either case, including \( u \) in \( M_B \) is unnecessary. Therefore, we remove \( u \) from all relevant structures: \( B^2 \), \( B^{all} \), and \( T^{v_{j}} \).

Now consider \( u \in B^{1} \). By Lemma~\ref{treelemma}, either \( u \in M_B \), or at least one of its adjacent vertices in \( B^{all} \) must be included in \( M_B \), satisfying the selective subset condition for \( u \).

Suppose, for contradiction, removing a vertex \( u \) from \( M \) still leaves \( M \) as a valid selective subset. Let \( u \) have been added to \( M \) during a specific iteration of the algorithm on subtree \( T^{v_{j}} \). We analyze the following cases:

\begin{itemize}
    \item If \( u \) was a lowest-level vertex at the time of inclusion in \( M_B \), then \( u \notin B^{2} \); otherwise, it would have been excluded. Thus, \( u \in B^{1} \), and by construction, \( u \) had no parent in \( T^{v_{j}} \); otherwise, the parent could have been selected instead. If we now exclude \( u \) from \( M_B \), it has no adjacent vertex in \( M_B \cap B^{all} \) with the same color, violating the selective subset condition—a contradiction.

    \item If, in a later iteration, \( u \) became the parent of a lowest-level vertex \( u' \in B^{1} \), then \( u \in M_B \) by the algorithm. Removing \( u \) from \( M_B \) would leave \( u' \) without a neighbor in \( M_B \cap B^{all} \) with the same color, again violating the selective subset condition—a contradiction.
\end{itemize}

Hence, this guarantees that \( \lvert M_B \rvert \) is minimized for each block \( B \), and thus \( M \) is a minimum selective subset of \( T \).

\end{proof}

\begin{lemma}\label{runtime}
Algorithm~\ref{alg:02} runs in $O(n)$ time.
\end{lemma}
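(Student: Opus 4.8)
The plan is to establish the bound by amortized analysis, arguing that over the whole execution each vertex and each edge of $T$ is touched only a constant number of times. First I would fix the supporting data structures that make every elementary operation $\mathcal{O}(1)$: a color array and boolean membership flags recording, for each vertex, whether it currently lies in $B^{1}$, $B^{2}$, or $B^{all}$; parent pointers (already available from the rooting of $T$); and, for each subtree $T^{v_{j}}$, a current-degree counter together with a stack of the vertices that are presently leaves of $T^{v_{j}}$. With these, selecting a leaf, testing block membership, adding a vertex to $M_B$, and relocating a grandparent from $B^{1}$ to $B^{2}$ are each $\mathcal{O}(1)$.

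Next I would dispatch the Subtree Formation step (lines 1--4). Since $B^{all}\subseteq B\subseteq V(T)$, the maximal connected monochromatic pieces $T^{v_{1}},\dots,T^{v_{t}}$ are precisely the connected components of the induced forest $T[B^{all}]$, which a single traversal identifies by visiting each vertex and each of the $n-1$ tree edges once. Because the blocks partition $V(T)$, summing this traversal over all blocks costs $\mathcal{O}(n)$.

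The heart of the argument is the Selection Process (lines 5--31). The key invariant is that every iteration of the inner \textbf{while} loop (lines 9--30) \emph{permanently deletes} at least one vertex from the current subtree $T^{v_{j}}$: either a $B^{2}$ leaf $u$, or, in the $B^{1}$ case, the vertex $u$ itself (together with its parent $v$ and $v$'s children, all of which are leaves because $u$ is lowest-level, so no descendant is orphaned). Crucially, no vertex is ever re-inserted into any $T^{v_{j}}$, $B^{1}$, $B^{2}$, or $B^{all}$: grandparents only \emph{migrate} from $B^{1}$ to $B^{2}$, and every other update is a deletion. Hence the total number of inner-loop iterations, summed over all subtrees and all blocks, is at most $n$, and I would charge the cost of each deletion to the vertex deleted so that the aggregate deletion cost is $\mathcal{O}(n)$.

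I expect the main obstacle to be the step ``remove $v$ along with its children,'' which touches several vertices in one iteration and could naively appear super-constant; the fix is exactly the charging scheme above, since each removed vertex is charged only once over its unique deletion, so these costs telescope to $\mathcal{O}(n)$. A secondary subtlety is exposing the next leaf after a batch of deletions: this is handled by decrementing the degree counter of the single surviving neighbor (the parent $w$ of $v$) and pushing it onto the leaf stack when its degree drops to one, which is $\mathcal{O}(1)$ per affected edge and thus $\mathcal{O}(n)$ in total. Combining the $\mathcal{O}(n)$ cost of subtree formation with the $\mathcal{O}(n)$ amortized cost of the selection process yields the claimed linear running time.
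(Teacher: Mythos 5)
Your proof is correct and follows essentially the same approach as the paper's: the paper likewise argues that initialization costs $\mathcal{O}(n)$, that each selection step deletes vertices that are never reprocessed, and that all set updates are constant-time, so each vertex is handled at most once. Your version simply makes this amortized charging argument explicit (data structures, leaf exposure, batch-deletion charging), filling in details the paper leaves implicit.
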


\begin{proof}
Collecting all sets $B^1$ and $B^{2}$ for all blocks takes $O(n)$ time (Algorithm~\ref{alg:0x}). For each subtree $T^{v_{j}}$ (Algorithm~\ref{alg:02}), at each step we either delete the lowest-level vertex $u$ (if $u \in B^2$), or add a parent or grandparent to $M$ and remove relevant nodes from the sets. Each operation involves only constant-time updates during the update of the sets, and each vertex is processed at most once. Thus, the total running time is $O(n)$.
\end{proof}

\textbf{Remark.} Since our algorithm is described for a single block and applies uniformly to all blocks, the core idea becomes clear: we aim to dominate all vertices in a block that are adjacent to vertices in other blocks. This naturally aligns with an MSOL-expressible formulation. Therefore, the algorithm is not only applicable to trees but also extends to graphs of constant treewidth.

\section{Linear-time Algorithm of MSS in Unit Interval Graphs}\label{section6}
Let \( I = (V(I), E(I)) \) be a unit interval graph with \( |V(I)| = n \). Each vertex in \( V(I) \) corresponds to a unit-length interval placed along the \( x \)-axis. Two vertices \( u \) and \( v \) are adjacent in \( I \) (i.e., \( \{u,v\} \in E(I) \)) if and only if their corresponding intervals overlap. That is, an \emph{edge} exists between two intervals if their projections on the \( x \)-axis intersect. The set \( E(I) \) denotes the collection of all such intersecting pairs of intervals.

Each interval has a \emph{left endpoint} and a \emph{right endpoint} along the \( x \)-axis. We say that an interval \( v \) is a \emph{left adjacent} of another interval \( u \) if the left endpoint of \( v \) is to the left of that of \( u \), and \( v \) intersects \( u \). Conversely, \( v \) is a \emph{right adjacent} of \( u \) if the left endpoint of \( v \) is to the right of that of \( u \), and they intersect. Among all intervals adjacent to a given interval, the one with the smallest \( x \)-coordinate at its left endpoint is referred to as the \emph{leftmost} interval, and the one with the largest \( x \)-coordinate at its right endpoint is called the \emph{rightmost} interval.

Each interval is assigned a color from a set of \( c \) colors. The goal, as in the case of trees (Section~\ref{section5}), is to compute a \mss{} that satisfies the color-consistency condition.

The \textbf{Initialization} phase follows the same strategy discussed for trees, where we sort the intervals by their left endpoints and organize neighborhood relationships accordingly. However, due to the linear geometric layout of unit interval graphs, the adjacency structure becomes more regular and can be exploited for efficiency.

The subsequent steps in the algorithm—such as identifying representative neighbors and choosing selective vertices—are similar in spirit to those for trees, but adapted to the interval model. The key difference lies in how neighbor intervals are processed: instead of using a tree traversal, we perform a left-to-right sweep over the sorted list of intervals, carefully maintaining data structures to track valid neighbor candidates within each color class. This adjustment leverages the geometry of the interval representation to maintain linear-time complexity while preserving correctness.

       

    \textbf{Unit Interval Subgraph Formation.}
    \begin{itemize}
        \item Since $B^{all}\subseteq B$, each $B^{all}$ induces one or more connected induced unit interval subgraphs in $I$, and each such graph consists of vertices of the same color because $B^{all}\subseteq B$. We focus on these induced connected maximal unit interval subgraphs.
        \item Let \( I_{1}, I_{2}, \dots, I_{t} \) be induced connected maximal unit interval subgraphs in $I$ formed by the vertices of \( B^{all} \).
    \end{itemize}

    \textbf{Selection Process.}
    \begin{itemize}         
      \item For each unit interval subgraph $I_{j}$ (where $1\leq j\leq t$), start from the leftmost interval $u$.
      \item $u$ must belong to either $B^1$ or $B^2$.
      \item If $u\in B^2$:
           \begin{itemize}
               \item Remove $u$ from $B^{2}$, $B^{all}$, and $I_{j}$, as it must be adjacent to a interval in $B^{1}$. 
           \end{itemize}
       
       \item If $u\in B^1$:
           \begin{itemize}
               \item If the rightmost adjacent interval of $u$ exists (say $v$) in $I_{j}$, add $v$ to $M_B$ and remove $w$ along with all of its left adjacent vertices from $B^1$, $B^2$, $B^{all}$, and $I_{j}$. Also move all the right adjacent intervals (if exists) of $v$ from $B^{1}$ to $B^2$ and update the sets \( B^{all} \) accordingly.
               \item If $u$ has no right adjacent interval in $I_{j}$, add $u$ to $M_B$ and remove it from $B^2$, $B^{all}$ and $I_{j}$.
           \end{itemize}
        \item Repeat the \textbf{Selection Process} until \( I_{j} \) becomes empty. Once \( I_{j} \) has no vertices, increment \( j \leftarrow j+1 \) and repeat the \textbf{Selection Process} until \( j = t \).
            \end{itemize}
%
%
%
%
Assume $M=\bigcup_{B\in I}M_B$. Since the above algorithm is very similar to the algorithm for trees, Lemma \ref{treelemma} must also hold for unit interval graphs.
\begin{lemma}\label{intervallemma}
    $M$ is a minimum selective subset of the interval graph $I$.
\end{lemma}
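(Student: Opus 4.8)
The plan is to mirror the two-step structure already used for the tree case (Lemma~\ref{tree}), adapting each argument to the interval sweep rather than the bottom-up tree traversal. By Observation~\ref{obs3} the blocks are independent, so it suffices to prove that for each block $B$ the set $M_B$ produced by the selection process is both selective (restricted to $B$) and of minimum size; the global claim then follows by taking $M=\bigcup_{B\in I}M_B$. As the excerpt notes, Lemma~\ref{treelemma} carries over verbatim: for any $u\in B^{1}$, either $u\in M$ or at least one neighbor of $u$ in $B^{all}$ lies in $M$. I would state this adapted version as the workhorse and reuse it in both steps.

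\textbf{Step 1 (Selectivity).} I would argue by contradiction exactly as in Lemma~\ref{tree}. Suppose some $x\in B$ with $x\in V_{\ell}$ has no same-colored nearest neighbor in $M\cup(V(I)\setminus V_{\ell})$. Take a nearest vertex $y$ in an adjacent block $D$ and a shortest path $P$ from $x$ to $y$; the last vertex $v_{l}\in B$ on this path before leaving $B$ lies in $B^{1}$, since it is adjacent to a vertex of a different color. By the interval analogue of Lemma~\ref{treelemma}, either $v_{l}\in M_B$ or some neighbor $z\in B^{all}$ of $v_{l}$ is in $M_B$. In either case $d(x,z)\le d(x,y)$ and $C(z)=C(x)$, contradicting the assumption. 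The geometry enters only through the observation that moving one step further along $P$ toward $x$ can only decrease the distance, which is immediate for hop-distance on the shortest path, so this step is essentially identical to the tree proof.

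\textbf{Step 2 (Minimality).} This is where the interval-specific selection rule must be checked carefully, and I expect it to be the main obstacle. Unlike the tree case, the selection step for $u\in B^{1}$ picks the \emph{rightmost} adjacent interval $v$ of $u$, deletes $v$ together with all of $v$'s left-adjacent intervals, and demotes $v$'s right-adjacent intervals from $B^{1}$ to $B^{2}$. I would show that this greedy left-to-right choice never wastes a vertex: each vertex added to $M_B$ is forced, in the sense that removing it would leave some vertex of $B^{1}$ with no same-colored neighbor in $M_B\cap B^{all}$. Concretely, I would argue that when $v$ is chosen, $v$ simultaneously "covers" $u$ and every interval to its right that it overlaps, so the rightmost choice is at least as efficient as any alternative covering $u$; this is the standard exchange argument for interval covering, and it is what makes the rightmost rule optimal rather than merely correct. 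I would then run the same two-case contradiction as in Lemma~\ref{tree}: if $u$ was added because it had no right-adjacent interval, excluding it strands $u$ itself; if $v$ was added as the rightmost neighbor of some leftmost $u\in B^{1}$, excluding $v$ strands $u$. Hence $|M_B|$ is minimum for each block.

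\textbf{Main obstacle.} The delicate point is justifying the demotion of right-adjacent intervals from $B^{1}$ to $B^{2}$ and verifying that a single chosen $v$ genuinely suffices for \emph{all} the $B^{1}$ vertices it dominates on its left, so that no later iteration is forced to re-cover them. In the tree this followed from the unique-parent structure; in unit interval graphs I would instead rely on the fact that unit-length intervals whose left endpoints are sorted have a consecutive-overlap (Helly-type) structure, so that the neighbors of $u$ in $B^{all}$ form a contiguous run and the rightmost one dominates the left portion of that run. Making this contiguity precise, and confirming that the demoted right-neighbors are correctly re-examined in subsequent sweeps without double-counting, is the crux of the minimality argument; once established, the counting $|M_B|=|M_B|$ and the overall linear-time bound follow exactly as in Lemmas~\ref{tree} and~\ref{runtime}.
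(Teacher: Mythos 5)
Your proposal takes essentially the same route as the paper's own proof: Step~1 is deferred to the tree argument of Lemma~\ref{tree} exactly as the paper does, and Step~2 rests on the same two-case contradiction (a vertex added because it had no right-adjacent interval, versus a vertex added as the rightmost neighbor of a leftmost $B^{1}$ vertex). The contiguity/demotion issue you flag as the ``main obstacle'' is in fact left entirely implicit in the paper's proof as well, so your plan is, if anything, more explicit than the published argument about where the interval geometry is actually needed.
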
  
\begin{proof}
We prove this in two steps: First, we show that \( M \) is a selective subset. Then, we prove that \( M \) is of minimum size among all selective subsets.
    
\textbf{Step 1:} (Selective subset)  
This step is very similar to the set \textbf{Step 1} of Lemma \ref{tree}.

\textbf{Step 2:} (Minimality)  
We now prove that the constructed set \( M \) is a minimum selective subset for the unit interval graph \( I \).

Let us consider a unit interval subgraph \( I_{j} \). Suppose \( u \in V_{\ell} \) is the leftmost interval in \( I_{j} \). If \( u \in B^{2} \), we do not immediately include it in \( M_B \). Instead, we consider its unique rightmost adjacent interval \( v \in B^{1} \). In future iterations, we may select either \( v \) or its own rightmost adjacent interval \( w \), if one exists. This strategy works because the interval \( u \) would still satisfy the selective subset condition through \( v \) or \( w \), provided \( d(u, w) \leq d(u, v') \) for some adjacent vertex \( v' \) of \( v \) with a different color. Hence, including \( u \) directly is not necessary.

Next, consider \( u \in B^{1} \). By Lemma~\ref{treelemma}, \( u \) must either be part of \( M_B \), or one of its adjacent intervals in \( B^{all} \) must be included in \( M_B \) to satisfy the selective subset condition.

To argue minimality, suppose a contradiction that some vertex \( u \in M \) can be removed while still maintaining \( M \) as a valid selective subset. Assume \( u \) was added to \( M \) during the iteration for \( I_{j} \). We examine two scenarios:

\begin{itemize}
    \item If \( u \) was the leftmost vertex at the time of its inclusion in \( M_B \), then \( u \notin B^{2} \); otherwise, it would have been excluded during the process. Hence, \( u \in B^{1} \). By construction, \( u \) had no right adjacent vertex (say \( v \)) in \( I_{j} \); otherwise, \( v \) could have been selected instead. If we now exclude \( u \) from \( M_B \), it will have no adjacent vertex in \( M_B \cap B^{all} \) with the same color, thereby violating the selective subset condition—a contradiction.

    \item If \( u \) was later identified as the rightmost neighbor of a leftmost vertex \( u' \in B^{1} \), then \( u \) must be in \( M_B \) according to the algorithm. Removing it would leave \( u' \) without a valid neighbor of the same color in \( M_B \), again violating the condition—a contradiction.
\end{itemize}

Therefore, the selection process ensures that no unnecessary vertices are included in \( M_B \), making it minimal for each block \( B \). As a result, the final set \( M \) is a minimum selective subset of the entire interval graph \( I \).

\end{proof}

The runtime is also the same as explained in Lemma \ref{runtime}.

\section{Remarks}
Since the \mss{} problem is proven to be \(\npc\) even in planar graphs, this raises several natural and important directions for further investigation. One key avenue is the development of efficient approximation algorithms that provide provable guarantees on solution quality. Understanding the approximation limits—both upper and lower bounds—would help delineate the tractability landscape of the problem on planar and other restricted graph classes.

Another promising line of research lies in the study of the parameterized complexity of \mss{}, particularly when the number of colors \(c\) is treated as a parameter. Designing fixed-parameter tractable (\fpt) algorithms when of \(c\) is a parameter, or proving parameterized hardness results, would enhance our theoretical understanding and could lead to practically efficient algorithms for instances where \(c\) is small.

Moreover, exploring the complexity of \mss{} on additional graph classes—such as circular-arc graphs, chordal graphs, and permutation graphs—could uncover new tractable cases or reveal deeper structural insights. These graph families often arise in scheduling, bioinformatics, and network analysis, and understanding \mss{} within these domains may have practical implications as well.

\section*{Acknowledgments}
I would like to sincerely thank Bodhayan Roy for his constant guidance and support throughout this work. I am also grateful to Aritra Banik, Anil Maheshwari, Subhas C Nandy and Sasanka Roy for their valuable suggestions and feedback that helped improve the initial version of this paper. Their insights were instrumental during the early stages of this research. I also thank the anonymous reviewers for their thoughtful comments and suggestions, which helped enhance the quality and presentation of the paper.
\bibliographystyle{unsrt} 
\bibliography{references}  
\end{document}